\newcommand{\pnt}[1]{{\mbox{\boldmath $#1$}}}
\newcommand{\cof}[2]{\mbox{$#1_{\boldsymbol{#2}}$}}
\newcommand{\V}[1]{\mbox{$\mathit{Vars}(#1)$}}
\newcommand{\Va}[1]{\mbox{$\mathit{Vars}(\boldsymbol{#1})$}}
\newcommand{\s}[1]{\mbox{$\{#1\}$}}
\newcommand{\nGz}[2]{$G_{non-\{z\}}$}
\newcommand{\prr}[1]{\mi{Prev}(\boldsymbol{q})}
\newcommand{\mi}[1]{\mathit{#1}}
\newcommand{\ti}[1]{\textit{#1}}
\newcommand{\tb}[1]{\textbf{#1}}
\newcommand{\Dds}[2]{\mbox{\pnt{#1}~$\rightarrow #2$}~}
\newcommand{\Dss}[4]{\mbox{$(\prob{#1}{#2},\pnt{#3})~\rightarrow #4$}}
\newcommand{\ttt}{\>\>\>}
\newcommand{\Tt}{\>\>}
\newcommand{\Sup}[2]{\mbox{$#1^\mi{#2}$}}
\newcommand{\prob}[2]{\mbox{$\exists{#1} [#2]$}}
\newcommand{\DS}{\mbox{$\Omega$}}
\newcommand{\slv}{DS-QSAT}
\newcommand{\cl}{SAT\_ALG}
\newcommand{\Slv}{\ti{DS-QSAT}}
\begin{document}

\title{Checking Satisfiability  by Dependency Sequents}

\author{Eugene Goldberg,  Panagiotis Manolios}
\institute{Northeastern University, USA \email{\{eigold,pete\}@ccs.neu.edu}}

\maketitle

\begin{abstract}
We introduce a new algorithm for checking satisfiability based on a calculus of 
Dependency sequents (D-sequents).
Given a CNF formula $F(X)$,  a D-sequent is a record stating that under a partial assignment a set
of variables of $X$ is redundant in formula \prob{X}{F}. The D-sequent calculus
is based on operation \ti{join} that  forms a new D-sequent from
two existing ones. The new algorithm solves the \ti{quantified} version of SAT.
That is, given a satisfiable formula $F$, it, in general, does not produce an assignment satisfying $F$.
 The new algorithm is called
\slv~where DS stands for Dependency Sequent and Q for Quantified.
 Importantly, a DPLL-like procedure is only a special case
of \slv~where  a very restricted kind of D-sequents is used. We argue
that this restriction a) adversely affects scalability of SAT-solvers
b) is caused by looking for an \ti{explicit} satisfying
assignment rather than just proving satisfiability. We give experimental results
substantiating these claims.

\end{abstract}

\section{Introduction}
Algorithms for solving the Boolean satisfiability problem  are  an important part of  modern design flows.
Despite  great progress in the performance of such algorithms achieved recently,
the scalability of SAT-solvers still remains a big issue.
In this paper, we address this issue  by introducing  a new method of satisfiability checking that can be viewed  as a descendant of the
DP procedure~\cite{dp}.

We consider Boolean formulas represented in Conjunctive Normal Form (CNF). Given a CNF formula $F(X)$,  
one can formulate two different kinds of satisfiability checking problems. We will refer to the problems of the first kind as \tb{QSAT} where
Q stands for quantified. Solving QSAT means just checking if \prob{X}{F} is true. In particular, if $F$ is satisfiable, a QSAT-solver
does not have to produce an assignment satisfying $F$. The problems of the second kind that we will refer to as just \tb{SAT} are a special
case of those of the first kind. If $F$ is satisfiable, a SAT-solver has to produce an assignment satisfying $F$.

Intuitively, QSAT should be easier than SAT because a QSAT-solver needs to return only one bit of information. This intuition is substantiated
by the fact that checking if an integer number $N$ is prime (i.e. answering the question if non-trivial factors of $N$ \ti{exist})
 is polynomial while finding factors of  $N$ \ti{explicitly}  is believed to be hard. 
However, the situation among practical  algorithms defies this intuition. Currently, the field is dominated by  procedures based on DPLL algorithm
~\cite{dpll} that is by SAT-solvers. On the other hand,  a classical  QSAT-solver, the DP procedure~\cite{dp}, does not have any competitive descendants
(although some elements of the DP procedure are used in formula preprocessing performed by SAT-solvers ~\cite{prepr}). 

In this paper, we introduce a QSAT-solver called \tb{DS-QSAT} where DS stands for 
Dependency Sequent. On the one hand, \slv~can be viewed as a descendant of the DP procedure. On the other hand, DPLL-like procedures
with clause learning is a special case of \slv. Like DP procedure, \slv~is based on the idea of elimination of redundant variables. 
A variable $v \in X$ is redundant in
\prob{X}{F} if the latter is equivalent to \prob{X}{F \setminus F^v} where $F^v$ is the set of all clauses of $F$ with $v$. Note that removal
of clauses of $F^v$ produces a formula that is  equisatisfiable rather than  functionally equivalent to $F$.

If $F$ is satisfiable,  all variables of $X$ are redundant in \prob{X}{F} because an empty set of clauses is satisfiable.
If $F$ is unsatisfiable, one can make the variables of $F$ redundant by deriving an empty clause and adding it to  $F$.
An empty clause  is unsatisfiable, hence  all other clauses of $F$ can be dropped.
So, from the viewpoint of \slv, the only difference between  satisfiable and unsatisfiable formulas  is as follows.
If $F$ is satisfiable, its variables  are \ti{already} redundant and one just needs to prove this redundancy. If $F$ is unsatisfiable,
one has to \ti{make} variables redundant by derivation and adding to $F$ an empty clause.

The DP procedure makes a variable $v$ of $X$ redundant \ti{in one step}, by adding to $F$ all  clauses that can be produced by resolution on  $v$.
This is extremely inefficient due to generation of prohibitively large sets of clauses  even for very small formulas. \slv~addresses this
problem by using branching. The idea is to prove redundancy of variables in subspaces and then ``merge'' the obtained results.
\slv~records the fact
that a set of variables $Z$ is redundant in \prob{X}{F} in subspace specified by partial assignment \pnt{q} as \Dss{X}{F}{r}{Z}.
Here \pnt{r} is a subset of the assignments of \pnt{q}  relevant to redundancy of $Z$. 
The record \Dss{X}{F}{r}{Z} is called a dependency sequent (or \tb{D-sequent} for short). To simplify notation, if $F$ and $X$ are obvious
from the context,  we record the D-sequent above   as just \Dds{r}{Z}.

A remarkable fact is that 
 a resolution-like operation called \ti{join}
can be used  to produce a new  D-sequent from two D-sequents derived earlier~\cite{tech_rep1,fmcad12}.  Suppose, for example,
that  D-sequents $(x_1=0,x_2=0) \rightarrow \s{x_9}$ and $(x_2=1,x_5=1) \rightarrow \s{x_9}$  specify redundancy of variable $x_9$
in different branches of  variable $x_2$. Then 
D-sequent $(x_1=0,x_5=1) \rightarrow \s{x_9}$ holds where the left part assignment of this D-sequent is obtained by
taking the union of the left part  assignments of the two D-sequents above but those to variable $x_2$.
The new D-sequent is said to be obtained by joining the two D-sequents above at variable $x_2$.
The calculus based on the join operation is complete. That is, eventually \slv~derives  D-sequent
$\emptyset\rightarrow X$ stating \ti{unconditional} redundancy of the variables of $X$ in \prob{X}{F}. If by the time
the D-sequent above is derived, $F$ contains an empty clause,
$F$ is unsatisfiable. Otherwise, $F$ is satisfiable. Importantly, if $F$ is satisfiable, 
derivation of D-sequent $\emptyset\rightarrow X$
does not require finding an assignment satisfying $F$.

DPLL-based SAT-solvers with clause learning can be viewed as a special case of \slv~where only a particular kind of D-sequents is used.
This limitation on D-sequents is  caused by the necessity to generate a satisfying assignment as a proof of satisfiability.
Importantly, this necessity deprives DPLL-based SAT-solvers of using transformations preserving equisatisfiability rather than
functional equivalence. In turn, this adversely affects the performance of SAT-solvers.
We illustrate this point by comparing the performance of DPLL-like SAT-solvers and a version of \slv ~on compositional formulas. 
This version of \slv~use the strategy of  lazy backtracking as opposed to that of  eager backtracking employed by DPLL-based procedures.
A compositional CNF formula 
has the form $F_1(X_1) \wedge \ldots \wedge F_k(X_k)$ where $X_i \cap X_j = \emptyset$,$i \neq j$. Subformulas $F_i, F_j$ are identical
modulo variable renaming/negation. We prove theoretically that performance of \slv~is \ti{linear} in $k$. On the other hand, one can argue that the
average performance of DPLL-based SAT-solvers with conflict learning should be \ti{quadratic} in $k$. In Section~\ref{sec:experiments},
we describe experiments confirming our theoretical results.

The contribution of this paper is fourfold. 
First, we use the machinery of D-sequents to explain some problems  of DPLL-based SAT-solvers.
Second, we describe a new QSAT-solver based on D-sequents called \slv.
Third, we give a theoretical analysis of the behavior of \slv~on compositional formulas.
Fourth,  we show the promise of~\slv by comparing its performance with that of well-known SAT-solvers on  compositional
and non-compositional formulas.

This paper is structured  as follows. In Section~~\ref{sec:qsat_sat} we discuss the complexity of QSAT and SAT.
 Section~\ref{sec:comparison} gives a brief introduction into \slv.
We recall D-sequent calculus in Section~\ref{sec:calculus}. A detailed description of \slv~is given in Section~\ref{sec:alg_description}.
Section~\ref{sec:comp_formulas} gives some theoretical results on performance of \slv. 
Section~\ref{sec:skipping} describes a modification of \slv~ that allows additional pruning of the search tree.
Experimental results are given
in Section~\ref{sec:experiments}. We describe some background of this research in Section~\ref{sec:background} and give
conclusions in Section~\ref{sec:conclusion}.

\section{Is QSAT Simpler Than SAT?}
\label{sec:qsat_sat}
%
%
\setlength{\intextsep}{4pt}
\setlength{\textfloatsep}{10pt}
\begin{wrapfigure}{L}{2in}
\small
\vspace{-10pt}
\begin{tabbing}
aaa\=bb\= cc\= ddddddd\= \kill
$\mi{gen\_sat\_assgn}(F)$\{\\
\tb{\scriptsize{1}}\> $\mi{ans}=\mi{solve\_qsat}(F)$; \\
\tb{\scriptsize{2}}\> if (\ti{ans}=\ti{unsat}) return(\ti{unsat}); \\  
\tb{\scriptsize{3}}\> $\pnt{s}:=\emptyset$; $X$:=\V{F}; \\
\tb{\scriptsize{4}}\> while ($X \neq \emptyset$) \{\\
\tb{\scriptsize{5}}\Tt  $v := \mi{pick\_var}(X)$; \\
\tb{\scriptsize{6}}\Tt  if ($\mi{solve\_qsat}(F_{v=0}) = \mi{sat}$) \\
\tb{\scriptsize{7}}\ttt   $\mi{val}=0$;\\
\tb{\scriptsize{8}} \Tt else  $\mi{val}=1$;\\
\tb{\scriptsize{9}}\Tt $F:=F_{v=val};$ \\
\tb{\scriptsize{10}}\Tt    $\pnt{s}=\pnt{s}\cup \s{(v=val)}$; \\
\tb{\scriptsize{11}}\Tt  $X:=X \setminus \s{v}$;\} \\ 
\tb{\scriptsize{12}}\> return(\pnt{s}); \\
\end{tabbing} 
\vspace{-20pt}
\caption{SAT-solving by QSAT}
\label{fig:sat_by_qsat}
\end{wrapfigure}

In this section, we make the following point. Both QSAT-solvers and SAT-solvers have exponential complexity on the set 
of \ti{all} CNF formulas, unless P = NP. However, this is not true for subsets of CNF formulas. It is possible that 
a set $K$ of formulas describing, say, properties of a parameterized set of designs can be solved in polynomial time by some QSAT-solver
while any SAT-solver has  exponential complexity on $K$.

To illustrate the point above, let us consider procedure \ti{gen\_sat\_assgn} shown in  Figure~\ref{fig:sat_by_qsat}.
It finds an assignment satisfying a CNF formula $F$ (if any) by solving a sequence of QSAT problems.
First, \ti{gen\_sat\_assgn} calls a QSAT-solver \ti{solve\_qsat} to check
 if $F$ is satisfiable (line 2). If it is, \ti{gen\_sat\_assgn} picks a variable $v$ of $F$ 
(line 5) and calls \ti{solve\_qsat} to find  assignment  $v=val$ under which formula $F$ is satisfiable (lines 6-8). Since $F$ is satisfiable,  
$F_{v=0}$ and/or $F_{v=1}$ has to be satisfiable. Then \ti{gen\_sat\_assgn} fixes variable $v$ at the chosen value \ti{val}  and 
adds ($v$=\ti{val}) to assignment \pnt{s}  (lines 9-10) that was originally empty. 
The \ti{gen\_sat\_assgn} procedure keeps assigning variables of $F$ in the same manner 
  in a loop (lines 5-11) until every variable of $F$ is assigned. At this point, \pnt{s} is a satisfying assignment of $F$.

The number of QSAT checks performed by \ti{solve\_qsat} in  \ti{gen\_sat\_assgn} is at most $n + 1$. 
So if there is a QSAT-solver solving all satisfiable CNF formulas in polynomial time, \ti{gen\_sat\_assgn} can use
this QSAT-solver in its inner loop to find a satisfying assignment for any satisfiable formula in polynomial time. However, this is not true
when considering \ti{a subset} $K$ of all possible CNF formulas.
Suppose there is a QSAT-solver solving the formulas of $K$ in polynomial time. Let $F$ be a formula of $K$. Let  \cof{F}{q} denote $F$ 
under partial assignment \pnt{q}.
The fact that $F \in K$ does not imply   $\cof{F}{q} \in K$. So the behavior of \ti{gen\_sat\_assgn} using this QSAT-solver
in the inner loop may actually be even \ti{exponential} if this QSAT-solver does not perform well on formulas \cof{F}{q}.

For example, one can  form a subset  $K$ of all possible CNF formulas such that a) a formula $F \in K$ describes a check that a number $N$ is composite
and b) an assignment satisfying $F$ (if any) specifies two numbers $A$,$B$ such that  $A \neq 1$,$B \neq 1$ and $A\times B = N$.
The satisfiability of formulas in $K$  can be checked by a QSAT-solver in polynomial time ~\cite{primes}. At the same time,
finding satisfying assignments of formulas from $K$ i.e. factorization of composite numbers is believed to be hard.
For instance,  \ti{gen\_sat\_assgn} cannot use the QSAT-solver above to find satisfying
assignments for formulas of $K$ in polynomial time.   The reason is that 
formula \cof{F}{q} does not specify  a check if a number is composite. That is $F \in K$ does not imply that $\cof{F}{q} \in K$.

Note that a SAT-solver is  also limited in the ways of  proving \ti{unsatisfiability}. For a SAT-solver, 
such a proof is just a failed attempt to build a satisfying assignment \ti{explicitly}. For example, instead of using
the polynomial algorithm of ~\cite{primes}, a SAT-solver would prove that a  number $N$ is prime by failing to find
two non-trivial factors of $N$.

\section{Brief comparison of DPLL-based SAT-solvers and \slv~in Terms of D-sequents}
\label{sec:comparison}
In this section, we use the notion of D-sequents to discuss some limitations of DPLL-based SAT-solvers. We also 
explain  how  \slv~ (described in Section~\ref{sec:alg_description} in detail) overcomes those limitations.
%
%
%
\begin{example}
\label{exmp:brief_comparison}
Let \cl~be a DPLL-based SAT-solver with clause learning. 
We assume that the reader is familiar with
the basics of such SAT-solvers~\cite{grasp,chaff}.
Let $F$ be a CNF formula of 8 clauses where $C_1=\overline{x}_1 \vee \overline{x}_3$,
$C_2 = \overline{x}_2 \vee x_3$, $C_3 = x_1 \vee x_2 \vee x_3$, $C_4 = x_2 \vee \overline{x}_3$,
$C_5 = \overline{x}_1 \vee x_4 \vee x_5$,  $C_6 = x_4 \vee \overline{x}_5$, $C_7 = \overline{x}_4 \vee x_5$, 
$C_8= \overline{x}_1 \vee \overline{x}_4 \vee \overline{x}_5$.
The set $X$ of variables of $F$ is equal to  \s{x_1,x_2,x_3,x_4,x_5}. 

Let \cl~first make assignment $x_1=0$. This satisfies clauses $C_1$,$C_5$,$C_8$ and removes literal $x_1$ from $C_3$.
Let \cl~then make assignment $x_2=0$.  Removing literal $x_2$ from  $C_3$ and $C_4$ turn them into unit clauses $x_3$ and $\overline{x}_3$
respectively. This means that \cl~ran into a conflict. At this point, \cl~generates  conflict clause $C_9 = x_1 \vee x_2$ that
is obtained by resolving clauses $C_3$ and $C_4$ on $x_3$ and adds $C_9$ to $F$. After that, \cl~erases assignment $x_2=0$ and the assignment made by \cl~to $x_3$
and runs BCP that assigns $x_2=1$ to satisfy $C_9$ that is currently unit.
In terms of D-sequents, one can view generation of conflict clause $C_9$ and adding it to $F$ as derivation of  D-sequent $S$ equal to
$(x_1=0,x_2=0)\rightarrow \s{x_3,x_4,x_5}$.  D-sequent  $S$ says
that making assignments falsifying clause $C_9$ renders all unassigned variables redundant. Note that $S$ is inactive
in the subspace $(x_1=0,x_2=1)$ that \cl~enters after assigning  1 to $x_2$. (We will say that D-sequent \Dds{r}{Z} is
\tb{active} in the subspace specified by partial assignment \pnt{q} if the assignments of \pnt{r} are a subset of those of \pnt{q}.)
So the variables $x_3,x_4,x_5$ 
proved redundant in subspace $(x_1=0,x_2=0)$ become non-redundant again.

One may think that reappearance of variables $x_3,x_4,x_5$ in subspace $(x_1=0,x_2=1)$  is ``inevitable'' but this is not so.
Variables $x_4$,$x_5$ have at least two reasons to be redundant in subspace $(x_1=0,x_2=0)$. First, 
 $C_9$ is falsified in this subspace. Second, the only clauses of $F$ containing variables $x_4$,$x_5$ are $C_5$,$C_6$,$C_7$,$C_8$.
But $C_5$ and $C_8$ are  satisfied by $x_1=0$ and  $C_6,C_7$
 can be satisfied  by an assignment to $x_4$,$x_5$.  So $C_5$,$C_6$,$C_7$,$C_8$  can be removed from $F$
in subspace $x_1=0$ without affecting the satisfiability of $F$. Hence D-sequents $S_1$ and $S_2$ equal to $(x_1=0)\rightarrow \s{x_4}$
and  $(x_1=0)\rightarrow \s{x_5}$ are true. (In  Example~\ref{exmp:alg_description}, we will show how  $S_1$ and $S_2$ are
derived by \slv.) Suppose that one replaces the D-sequent  $S$ above  with D-sequents $S',S_1,S_2$ 
where $S'$ is equal to $(x_1=0,x_2=0)\rightarrow \s{x_3}$.
Note that only D-sequent $S'$ is inactive in subspace $(x_1=0,x_2=1)$. So \ti{only variable} $x_3$ reappears after $x_2$ changes its
value from 0 to 1 $\square$
\end{example}

\vspace{-5pt}
The example above illustrates the main difference between \cl~and \slv~in terms of D-sequents.
At every moment,  \cl~has \ti{at most one}  active D-sequent.
This D-sequent is of the form \Dds{r}{Z} 
where $r$ is an assignment falsifying a clause of $F$ and $Z$ is the set of \ti{all} variables that are currently
unassigned. \slv~may have a \ti{set} of active D-sequents $\Dds{r_1}{Z_1},\ldots,\Dds{r_k}{Z_k}$
 where $Z_1 \cup \ldots \cup Z_k = Z$, $Z_i \cap Z_j=\emptyset$,$i \neq j$. When \cl~changes the value of variable $v$ of \Va{r},
all the variables of $Z$ reappear as non-redundant. When \slv~changes the value of $v$,  variables
of $Z_i$ reappear \ti{only} if $v \in \Va{r_i}$. So only a subset of variables of $Z$ reappear.

 To derive D-sequents \Dds{r_i}{Z_i} above, \slv~goes on branching
\ti{in the presence of a conflict}. Informally, the goal of such branching is to find alternative ways of proving redundancy of  variables from $Z$.
So \slv~uses extra branching to minimize the number of variables reappearing in the right branch (after the left
branch has been explored). This should eventually lead to the opposite result i.e. to reducing  the amount of branching. 
Looking for alternative ways to prove redundancy
can be justified as follows. A practical formula $F$ typically can be represented as $F_1(X_1,Y_1) \wedge \ldots \wedge F_k(X_k,Y_k)$. 
Here $X_i$ are internal variables of $F_i$ and $Y_i$ are ``communication'' variables
that $F_i$ may share with some other subformulas $F_j$, $j \neq i$. One can view $F_i$ as describing a ``design block'' with external
variables $Y_i$. Usually, $|Y_i|$ is much smaller than $|X_i|$.  Let a clause of  $F_i$ be falsified by the current assignment
due to a conflict. Suppose that 
at the time of the conflict all variables of $Y_j$ of subformula $F_j$ were assigned and their values were specified by assignment \pnt{y_i}.
Suppose \pnt{y_i} is consistent for $F_i$ i.e. \pnt{y_i} can be extended by assignments to $X_i$ to satisfy $F_i$. This means
that the variables of $X_i$ are redundant in subspace \pnt{y_i} in \prob{V}{F} where $V = \V{F}$. Then by branching
on variables of $X_i$ one can derive D-sequent \Dds{y_i}{X_i}. If \pnt{y_i} is inconsistent for $F_i$, then
by branching on variables of $X_i$ one can derive a clause $C$ falsified by \pnt{y_i}. Adding $C$ to $F$ makes the variables of $X_i$
redundant in \prob{V}{F} in subspace \pnt{y_i}.  So the existence of many ways to prove variable redundancy is essentially implied by the fact
that formula $F$ has  structure.

The possibility to  control the size of right branches  gives an algorithm a lot of power.
Suppose, for example, that an algorithm guarantees that  the number of variables reappearing in the right branch is bounded by
a constant $d$. We assume that this applies to the right branch going out of any node of the search tree, including the root node.
 Then the size of the search tree built by such an algorithm is  $O(|X|\cdot 2^d)$. Here $|X|$ is the maximum depth 
of a search tree built by branching on variables of $X$  and $2^d$ is the number of nodes in a full binary sub-tree over $d$ variables. So the factor
$2^d$ limits the size of the right branch. The complexity of an algorithm building such a search tree is 
 \ti{linear} in $F$. In Section~\ref{sec:comp_formulas}, we show that  bounding the size
of right branches by a constant is exactly the reason why the complexity of \slv~on compositional formulas is linear in the number
of subformulas.

The limitation of
D-sequents available to \cl~ is consistent with the necessity to produce a satisfying assignment. Although such limitation
cripples the ability of an algorithm to compute  the parts of the formula that are redundant in the current subspace, it does not matter
much for \cl. 
The latter simply cannot use  this redundancy because it is formulated with respect to formula \prob{X}{F} rather than $F$.
Hence, discarding the clauses
containing redundant variables preserves equisatisfiability rather than functional equivalence.
So, an algorithm using such transformations cannot guarantee that  a satisfying assignment it found is correct.

\section{D-sequent Calculus}
\label{sec:calculus}
In this section, we recall the   D-sequent calculus introduced ~\cite{tech_rep1,fmcad12}.
In Subsections~\ref{subsec:basic_def} and ~\ref{subsec:simple_cases} we give basic definitions
and describe simple cases of variable redundancy.
 The notion of D-sequents is introduced in Subsection~\ref{subsec:d_seqs}. Finally,  the operation 
of joining D-sequents is presented in Subsection~\ref{subsec:join_oper}.
\label{sec:rvars_bps}
%
%
\subsection{Basic definitions}
\label{subsec:basic_def}
%
%
\begin{definition}
\label{def:cnf}
A \tb{literal} of a Boolean variable $v$ is $v$ itself and its negation.
A \tb{clause} is a disjunction of literals.  A formula $F$ represented
as a conjunction of clauses is said to be the Conjunctive Normal Form (\tb{CNF}) of $F$.
A CNF formula $F$ is also viewed as \tb{a set of clauses}.
 Let \pnt{q} be an
assignment, $F$ be a CNF formula, and $C$ be a clause. {\boldmath \Va{q}}
denotes the variables assigned in \pnt{q}; {\boldmath \V{F}} denotes the set
of variables of $F$; {\boldmath \V{C}} denotes the set of variables of $C$.
\end{definition}
%
%
\begin{definition}
\label{def:cofactor}
Let \pnt{q} be an assignment. Clause $C$ is \tb{satisfied}  by \pnt{q} if a literal of $C$ is
set to 1 by \pnt{q}. Otherwise, $C$ is \tb{falsified} by \pnt{q}. Assignment \pnt{q}
\tb{satisfies} $F$ if \pnt{q} satisfies every clause of $F$. 
\end{definition}
%
%
\begin{definition}
Let $F$ be a CNF formula and \pnt{q} be a partial assignment to variables of $F$.
Denote by \cof{F}{q} that is obtained from $F$ by 
a) removing all clauses of $F$ satisfied by \pnt{q};
b) removing the literals set to 0 by \pnt{q} from the clauses that are not satisfied by \pnt{q}.
Notice, that if \pnt{q}=$\emptyset$, then \cof{F}{q} = $F$.  
\end{definition}

%
%
\begin{definition}
\label{def:X_clause}
Let $F$ be a CNF formula and $Z$ be a subset of  \V{F}.
 Denote by {\boldmath $F^{Z}$} the set of all clauses of $F$ containing at least one  variable of $Z$.
\end{definition}
%
%
\begin{definition}
\label{def:red_vars}
The variables of $Z$ are 
\textbf{redundant}  in formula \prob{X}{F} if
\prob{X}{F} $\equiv \exists X [F\setminus F^Z]$. We note that 
since $F \setminus F^Z$ does not contain any $Z$ variables, we
could have written  $\exists (X \setminus Z) [F \setminus
F^Z]$. To simplify notation, we avoid explicitly using this
optimization in the rest of the paper. 
\end{definition}

%
%
\begin{definition}
Let \pnt{q_1} and \pnt{q_2} be assignments. The expression
$\pnt{q_1} \leq \pnt{q_2}$ denotes the fact that $\Va{q_1} \subseteq \Va{q_2}$ and each variable  of \Va{q_1}
has the same value in \pnt{q_1} and \pnt{q_2}.
\end{definition}

%
%

\subsection{Simple cases of variable redundancy}
\label{subsec:simple_cases}
There at least two cases where proving that a variable of $F$ is redundant in \prob{X}{F} is easy. The first case concerns
monotone variables of $F$. A variable $v$ of $F$ is called \tb{monotone}  
if all clauses of $F$ containing $v$ have only positive (or only negative) literal of $v$.
A monotone variable $v$ is redundant in \prob{X}{F} because removing  the  clauses with $v$ from $F$ does not
change the satisfiability of $F$.
The second case concerns the presence of an empty clause. If $F$ contains such a clause, every  variable of $F$
is redundant.

%
%
\subsection{D-sequents}
\label{subsec:d_seqs}
%
%
%
\begin{definition}
\label{def:d_sequent}
Let $F(X)$  be a CNF formula. Let \pnt{q} be an assignment to $X$
and $Z$ be a subset of $X \setminus \Va{q}$.
A dependency sequent (\textbf{D-sequent})  has the form
$(\prob{X}{F},\pnt{q})\!\rightarrow\!Z$. It states that 
the variables of $Z$ are redundant in  \prob{X}{\cof{F}{q}}.
If formula $F$ for which a D-sequent holds
 is obvious from the context we will write this D-sequent in a short notation:
\Dds{q}{Z}\!\!.
\end{definition}

%
%
\begin{example}
Let $F$ be a CNF formula of  four clauses: $C_1 = x_1 \vee x_2$,
$C_2 = \overline{x}_1 \vee \overline{x}_2$, $C_3 = \overline{x}_1 \vee x_3$,
$C_4 = x_2 \vee \overline{x}_3$. Notice that since clause $C_1$ is satisfied 
in subspace $(x_2=1)$,  variable $x_1$ is monotone in formula $F_{x_2=1}$.
So D-sequent $(x_2=1) \rightarrow \s{x_1}$ holds. On the other hand, the assignment $\pnt{r}=(x_1=1,x_3=0)$ falsifies
clause $C_3$. So variable $x_2$ is redundant in \cof{F}{r} and D-sequent \Dds{r}{\s{x_2}} holds.
\end{example}

%
%
\subsection{Join Operation for D-sequents}
\label{subsec:join_oper}

%
%
\begin{proposition}[\cite{tech_rep1}]
\label{prop:join_rule}
Let $F(X)$  be a CNF formula. Let D-sequents \Dds{r'}{Z} and \Dds{r''}{Z}  hold, where $Z \subseteq X$.
 Let \pnt{r'}, \pnt{r''} have different values for exactly one variable
$v \in \Va{r'} \cap \Va{r''}$. Let \pnt{r} consist of all
assignments of \pnt{r'},\pnt{r''} but those to $v$.
Then, D-sequent  \Dds{r}{Z}  holds too.
\end{proposition}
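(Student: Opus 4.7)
The plan is to establish both directions of the redundancy equivalence for \pnt{r}. The easy direction, that satisfiability of $\cof{F}{r}$ implies satisfiability of $\cof{F}{r}\setminus(\cof{F}{r})^Z$, is immediate because the latter is a subset of the former. The work lies in the converse: given an assignment \pnt{w} to $(X\setminus\Va{r})\setminus Z$ that satisfies $\cof{F}{r}\setminus(\cof{F}{r})^Z$, I need to produce a \pnt{z} on $Z$ such that $\pnt{w}\cup\pnt{z}$ satisfies $\cof{F}{r}$ -- the kind of conclusion the given D-sequents \Dds{r'}{Z} and \Dds{r''}{Z} supply for their respective subspaces. I will carry this out by a case split on $\pnt{w}(v)$ followed by an invocation of the appropriate input D-sequent.

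Set $A=\Va{r'}\setminus\{v\}$ and $B=\Va{r''}\setminus\{v\}$, so $\Va{r}=A\cup B$, \pnt{r} agrees with \pnt{r'} on $A$ and with \pnt{r''} on $B$, and $v\notin\Va{r}$. From $Z\subseteq X\setminus\Va{r}$ and $v\in\Va{r'}\cap\Va{r''}$ it follows that $Z\cap\Va{r'}=\emptyset$ and $v\notin Z$. Assume without loss of generality that $\pnt{w}(v)=\pnt{r'}(v)$; the other case uses \Dds{r''}{Z} symmetrically. Define a companion assignment \pnt{w'} on $(X\setminus\Va{r'})\setminus Z$ by using \pnt{r}'s values on $B\setminus A$ and \pnt{w}'s values on the remaining variables (discarding the $v$-entry, which \pnt{r'} now fixes). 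A short bookkeeping check gives the crucial identity $\pnt{r}\cup\pnt{w}=\pnt{r'}\cup\pnt{w'}$ as total assignments on $X\setminus Z$.

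The central technical observation is that every clause of $\cof{F}{r'}\setminus(\cof{F}{r'})^Z$ arises from a clause $C\in F$ whose $Z$-variable set is empty: any $Z$-variable of $C$ would lie outside \Va{r'} and would therefore survive the \pnt{r'}-cofactor, contradicting that the result is $Z$-free. Such a $C$, cofactored by \pnt{r} instead, is either already satisfied by \pnt{r}'s values on $B\setminus A$ or else belongs to $\cof{F}{r}\setminus(\cof{F}{r})^Z$ and is thus satisfied by \pnt{w}. In either subcase, the total assignment $\pnt{r}\cup\pnt{w}=\pnt{r'}\cup\pnt{w'}$ satisfies $C$, so \pnt{w'} satisfies $\cof{F}{r'}\setminus(\cof{F}{r'})^Z$.

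Invoking the hypothesis \Dds{r'}{Z} on \pnt{w'} yields a \pnt{z} on $Z$ with $\pnt{w'}\cup\pnt{z}$ satisfying $\cof{F}{r'}$. Because $\pnt{r'}\cup\pnt{w'}=\pnt{r}\cup\pnt{w}$ on $X\setminus Z$, the total assignment $\pnt{r}\cup\pnt{w}\cup\pnt{z}$ satisfies $F$, so $\pnt{w}\cup\pnt{z}$ satisfies $\cof{F}{r}$, finishing the hard direction. The main obstacle I anticipate is the variable-set bookkeeping among $A$, $B$, \Va{r}, and \Va{r'}: engineering the identity $\pnt{r}\cup\pnt{w}=\pnt{r'}\cup\pnt{w'}$ on the nose is exactly what allows the transfer from \pnt{r'}-redundancy back to \pnt{r}-redundancy, and it depends essentially on the hypothesis that \pnt{r'} and \pnt{r''} disagree on exactly one common variable $v$.
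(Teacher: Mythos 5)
The paper defers the proof of Proposition~\ref{prop:join_rule} to~\cite{tech_rep1}, but it flags, in Subsection~\ref{subsec:virt_redund}, exactly the issue on which your argument breaks. The flawed step is the one where you ``invoke the hypothesis \Dds{r'}{Z} on \pnt{w'}.'' By Definitions~\ref{def:red_vars} and~\ref{def:d_sequent}, a D-sequent is a statement about two \emph{closed} formulas: \Dds{r'}{Z} asserts only that \prob{X}{\cof{F}{r'}} and \prob{X}{\cof{F}{r'} \setminus (\cof{F}{r'})^Z} have the same truth value. It does not provide the pointwise extension property you rely on, namely that a given \pnt{w'} satisfying $\cof{F}{r'} \setminus (\cof{F}{r'})^Z$ can be completed on $Z$ alone to an assignment satisfying \cof{F}{r'}. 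All the hypothesis yields is that \cof{F}{r'} is satisfiable by \emph{some} assignment, which may disagree with \pnt{w'} (and hence with \pnt{r}) on the variables of $\Va{r''} \setminus \Va{r'}$, and therefore cannot be transported back to subspace \pnt{r}. The gap is not repairable, because the strict statement is false: take $F = (a \vee b \vee z) \wedge (a \vee b \vee \overline{z}) \wedge (v \vee u)$ with $Z=\s{z}$, $\pnt{r'}=(v=0,a=0)$, $\pnt{r''}=(v=1,b=0)$. Both hypothesis D-sequents hold because \cof{F}{r'} and \cof{F}{r''} are satisfiable, yet for $\pnt{r}=(a=0,b=0)$ the formula \cof{F}{r} is unsatisfiable while $\cof{F}{r}\setminus(\cof{F}{r})^Z = \s{v \vee u}$ is satisfiable.

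This is precisely why Subsection~\ref{subsec:virt_redund} introduces \emph{virtual} redundancy and states that the join operation preserves only that weaker property. The constructive part of your proof is sound and in fact proves the virtual version: your transfer of \pnt{w} to \pnt{w'} correctly shows that if $\cof{F}{r}\setminus(\cof{F}{r})^Z$ is satisfiable then so is $\cof{F}{r'}\setminus(\cof{F}{r'})^Z$, whence by the hypothesis D-sequent \cof{F}{r'}, and therefore $F$ itself, is satisfiable --- which is exactly condition (b) of virtual redundancy in the case where condition (a) fails. So the repair is not to strengthen your construction but to weaken the conclusion (and, for iterated joins, the hypotheses) from redundancy in the sense of Definition~\ref{def:red_vars} to virtual redundancy.
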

%
%

\vspace{-5pt}
We will say that the D-sequent \Dds{r}{Z} of Proposition~\ref{prop:join_rule} is obtained
by \tb{joining D-sequents} \Dds{r'}{Z} and \Dds{r''}{Z} at variable $v$. 
The join operation is \ti{complete}~\cite{tech_rep1,fmcad12}. That is eventually, D-sequent $\emptyset \rightarrow X$
is derived proving that the variables of the current formula $F$ are redundant. If $F$ contains an empty clause,
then $F$ is unsatisfiable. Otherwise, it is unsatisfiable.

An obvious difference between the D-sequent calculus and resolution is that the former can handle both satisfiable
and unsatisfiable formulas. This limitation of resolution is due to the fact that it operates on subspaces
where formula $F$ is unsatisfiable. One can interpret resolving clauses $C',C''$ to produce clause $C$ as 
using the Boolean cubes $K'$,$K''$ where $C'$ and $C''$ are unsatisfiable to produce a new Boolean cube $K$ where
the resolvent $C$ is unsatisfiable. On the contrary, the join operation can be performed over parts of the search space where $F$
may be satisfiable. When D-sequents  \Dds{r'}{Z} and \Dds{r''}{Z} are joined, it does not matter whether formulas \cof{F}{r'}
and \cof{F}{r''} are satisfiable. The only thing that matters is that variables $Z$ are redundant in \cof{F}{r'} and \cof{F}{r''}.

%
%
\subsection{Virtual redundancy}
\label{subsec:virt_redund}
Let $F(X)$ be a CNF formula and \pnt{r}  be an assignment to $X$.
Let $Z \subseteq X$ and $\Va{r} \cap Z = \emptyset$.
 The fact that variables of $Z$ are redundant in $F$, in general,
does not mean that they are redundant in \cof{F}{r}.  Suppose, for example, that 
$F$ is satisfiable, \cof{F}{r} is unsatisfiable, $F$ does not have a clause falsified by \pnt{r} and $Z = \V{F} \setminus \Va{r}$.
Then  formula $\cof{F}{r} \setminus (\cof{F}{r})^Z$ has no clauses and so is satisfiable. Hence $\prob{X}{\cof{F}{r}} \neq $ 
$\exists{X}[\cof{F}{r} \setminus (\cof{F}{r})^Z]$ and so the variables of $Z$ are not redundant in \cof{F}{r}. On the other hand,
since $F$ is satisfiable,  the variables of $Z$ are redundant in \prob{X}{F}. 

We will say that the variables of $Z$ are \tb{virtually redundant} in \cof{F}{r} where $Z \cap \Va{r} = \emptyset$ if either a) 
$\prob{X}{\cof{F}{r}} = \exists{X}[\cof{F}{r} \setminus (\cof{F}{r})^Z]$ or b) 
$\prob{X}{\cof{F}{r}} \neq \exists{X}[\cof{F}{r} \setminus (\cof{F}{r})^Z]$ and $F$ is satisfiable.
In other words, if variables $Z$ are virtually redundant in \prob{X}{\cof{F}{r}},  removing the clauses with
a variable of $Z$ from \cof{F}{r} may be wrong but only \ti{locally}. From the global point of view 
this mistake does not matter because it occurs only when $F$ is satisfiable.

We need a new notion of redundancy because the join operation introduced above
preserves  virtual redundancy~\cite{tech_rep1} rather than redundancy in terms of Definition~\ref{def:red_vars}. Suppose, for example, 
that the  variables of $Z$ are redundant in \cof{F}{r_1} and \cof{F}{r_2} in terms of Definition~\ref{def:red_vars}
and so D-sequents  \Dds{r_1}{Z} and \Dds{r_2}{Z} hold.
  Let \Dds{r}{Z} be the D-sequent obtained by  joining the  D-sequents above. Then one can guarantee only
that the variables of $Z$ are virtually redundant in \cof{F}{r}. For that reason we need to replace the notion of redundancy
by Definition~\ref{def:red_vars} with that of  virtually redundancy.
 In the future explanation,  we will omit the word ``virtually''.
 That is  when  we say that variables of $Z$ are redundant in \cof{F}{r}
we actually mean that they are \ti{virtually} redundant in \cof{F}{r}. 

\section{Description of \slv}
\label{sec:alg_description}
In this section, we describe \slv, a QSAT-solver based on the machinery of D-sequents. 
%
%
\setlength{\intextsep}{2pt}
\setlength{\textfloatsep}{2pt}
\begin{wrapfigure}{L}{2.3in}
\small
// $F$  is a CNF formula \\
// \pnt{q} is an assignment to \V{F} \\
// \DS~is a set of active D-sequents \\
\vspace{-10pt}
\begin{tabbing}
aaaa\=bb\=cc\=dd\= \kill
\Slv($F$,\pnt{q},\DS)\{\\
\tb{\scriptsize{1}}\>  if (\ti{empty\_clause}($F$)) \\
                   \Tt    exit(\ti{unsat});\\
\tb{\scriptsize{2}}\>   if ($\mi{new\_falsif\_clause}(C,F,\pnt{q})$)\\
\tb{\scriptsize{3}}\Tt     if (\ti{left\_branch}(\pnt{q})) \\ 
\tb{\scriptsize{4}}      \ttt  \DS:=$\mi{update\_Dseqs}(\DS,F,C)$;\\
\tb{\scriptsize{5}}\Tt  else \{ \\
\tb{\scriptsize{6}}       \ttt  \DS:=$\mi{finish\_Dseqs}(\DS,F,C)$;\\
\tb{\scriptsize{7}}           \ttt return($F,\DS$); \}\\
\tb{\scriptsize{8}}\> $\DS := \mi{monot\_vars\_Dseqs}(\DS,F,\pnt{q})$;\\
\tb{\scriptsize{9}}\> if (\ti{all\_vars\_assgn\_or\_redund}(\DS,\pnt{q}); \\
\tb{\scriptsize{10}}\Tt if ($\mi{no\_falsif\_clauses}(F,\pnt{q})$) \\
                       \ttt  exit(\ti{sat}); \\
\tb{\scriptsize{11}}\Tt else  return($F,\DS$); \\
              \> - - - - - - - - - - - - - -\\
\tb{\scriptsize{12}}\> $v := \mi{pick\_variable}(F,\pnt{q},\DS)$; \\
\tb{\scriptsize{13}}\> \pnt{q_0}=\pnt{q} $\cup$   \s{(v=0)}; \\
\tb{\scriptsize{14}}\> $(F,\DS_0) \leftarrow$\Slv($F$,\DS,\pnt{q_0});\\
\tb{\scriptsize{15}}\> $(\Sup{\DS}{sym},\Sup{\DS}{asym}) = \mi{split}(\DS_0,v)$;\\
\tb{\scriptsize{16}}\> if ($\Sup{\DS}{asym} = \emptyset$)  return($F,\DS_0$);\\
\tb{\scriptsize{17}}\> $\mi{recover\_vars\_clauses}(F,\Sup{\DS}{asym})$; \\
\tb{\scriptsize{18}}\> \pnt{q_1}=\pnt{q} $\cup$   \s{(v=1)}; \\

\tb{\scriptsize{19}}\> $(F,\DS_1) \leftarrow$\Slv($F$,$\Sup{\DS}{sym}$,\pnt{q_1});\\
             \> - - - - - - - - - - - - - - \\
\tb{\scriptsize{20}}\> $(F,\DS)\!\leftarrow\!\mi{merge}(F,v,\pnt{q},\DS_0,\DS_1)$;\\
\tb{\scriptsize{21}}\> return($F,\DS$);\} \\
\end{tabbing} 
\vspace{-25pt}
\caption{\slv~procedure}
\label{fig:high_level_descr}
\end{wrapfigure}

%
%
\subsection{High-level view}
Pseudocode of ~\slv~is given in Figure~\ref{fig:high_level_descr}. \slv~accepts  a CNF formula $F$, 
a partial assignment \pnt{q} to $X$ where $X = \V{F}$, and a set of active D-sequents 
\DS~stating redundancy of \ti{some} variables from $X \setminus \Va{q}$ in subspace \pnt{q}.
\slv~returns CNF formula $F$ that consists of the clauses of the initial formula plus some resolvent clauses and a set \DS~of D-sequents
stating redundancy of \ti{every} variable of $X \setminus \Va{q}$ in subspace \pnt{q}. 
To check satisfiability of a CNF formula, one needs to call \slv~with $\pnt{q}=\emptyset$,
$\DS = \emptyset$.

\slv~is a branching procedure. If \slv~cannot prove redundancy of some variables in the current subspace, it picks
one of such variables $v$  and branches on it. So \slv~ builds a binary search tree where a node corresponds to a branching variable.
 We will refer to the first (respectively second) assignment  to $v$
as the \tb{left}  (respectively \tb{right}) \tb{branch} of $v$. Although Boolean Constraint Propagation (BCP) is not explicitly mentioned
in Figure~\ref{fig:high_level_descr},
it is included into the \ti{pick\_variable} procedure as follows. Let \pnt{q} be the current partial assignment. Then
a) preference is given  to branching on variables of unit clauses of \cof{F}{q} (if any); b) if $v$ is a variable of a unit clause of $C$ of \cof{F}{q}
and $v$ is picked for branching, then the value satisfying $C$ is assigned first.

As soon as a variable $v$ is proved redundant in the current subspace \pnt{q}, a D-sequent \Dds{r}{\s{v}} is recorded where
\pnt{r} is a subset of assignments of \pnt{q}. All the clauses of $F$ containing variable $v$ are marked as redundant and \ti{ignored}
until $v$ becomes non-redundant again. This happens when a variable of \Va{r} changes its value making the D-sequent \Dds{r}{\s{v}}
inactive in the current subspace.

As we mentioned in Section~\ref{sec:comparison}, if a clause $C$ 
containing a variable $v$ is falsified after an assignment is made to $v$, \slv~keeps making assignments to unassigned non-redundant variables.
However, this happens only in the \ti{left} branch of $v$. If $C$ is falsified in the right branch of $v$, \slv~backtracks. 
A unit clause $C'$  gets  falsified in the left branch only when
\slv~tries  to satisfy another unit clause $C''$ such that $C'$ and $C''$ have the opposite literals of a variable $v$.
We will refer to the node of the search tree corresponding
to $v$ as a \ti{conflict} one. The number of conflict nodes \slv~may have is not limited.

\slv~consists of three parts. In Figure~\ref{fig:high_level_descr}, they are separated by dashed lines. In the first part, described
in Subsections~\ref{subsec:termination} and~\ref{subsec:atomic_dseqs} in more detail, \slv~checks for termination conditions
and builds D-sequents for variables whose redundancy is obvious. In the second part (Subsection~\ref{subsec:branching}),
\slv~picks an unassigned non-redundant variable $v$ and splits the current subspace into  subspaces $v=0$ and $v=1$.
Finally, \slv~merges the results of branches $v=0$ and $v=1$ (Subsection~\ref{subsec:merging}).

%
%
\subsection{Eager and lazy backtracking (DPLL as a special case of \slv)}
\label{subsec:eager_lazy}
Let \pnt{q} be the current partial assignment to variables of $X$ and variable $v$ be the  variable assigned in \pnt{q} most recently.
Let $v$ be assigned a first value (left branch).  Let $C$ be a clause of $F$ falsified after $v$ is assigned in \pnt{q}.
In this case, procedure \ti{update\_Dseqs} of \slv~(line 4 of Figure~\ref{fig:high_level_descr}), 
derives a D-sequent \Dds{r}{Z'}. Here \pnt{r} is the smallest subset of assignments of \pnt{q} falsifying $C$ and 
$Z'$ is a subset of the current set $Z$ of the unassigned, non-redundant variables.

The version  where $Z' = \emptyset$ i.e. where no D-sequent \Dds{r}{Z'} is derived by \ti{update\_Dseqs}  will be called \slv~with \ti{lazy backtracking}.
In our theoretical and experimental evaluation of \slv~given in Sections~\ref{sec:comp_formulas}
 and~\ref{sec:experiments} we used the version with lazy backtracking.
The version of \slv~where  $Z'$ is always equal to $Z$ will be referred to as \slv~with \ti{eager backtracking}. 
DPLL is a special case of \slv~where the latter employs eager backtracking. In this case, all unassigned variables are declared
redundant and \slv~immediately backtracks without trying to prove redundancy of variables of $Z$ in some other ways.

%
%
\setlength{\intextsep}{0pt}
\setlength{\textfloatsep}{0pt}
\begin{wrapfigure}{L}{2in}
\small
\begin{tabbing}
aaaa\=bb\=cc\ dd\= \kill
// \pnt{q_0}=\pnt{q}$\cup$\s{(v=0)}; \pnt{q_1}=\pnt{q}$\cup$\s{(v=1)}; \\
// $C_0 = \mi{nil}$, $C_1=\mi{nil}$ if no clause of $F$ \\
// is falsified by \pnt{q_0},\pnt{q_1} respectively \\
  \\
$\mi{merge}(F,v,\pnt{q},\DS_0,\DS_1)$\{\\
\tb{\scriptsize{1}}\> for ($w \in (\V{F} \setminus (\Va{q} \cup {v})$) \{ \\
\tb{\scriptsize{2}}\Tt if ($\mi{symmetric\_in\_v}(\DS_1,w)$) \\
                    \ttt continue;\\
\tb{\scriptsize{3}}\Tt $S_0 = extract\_Dseq(\DS_0,w)$; \\
\tb{\scriptsize{4}}\Tt $S_1 = extract\_Dseq(\DS_1,w)$; \\
\tb{\scriptsize{5}}\Tt $S = join(S_0,S_1,v)$; \\
\tb{\scriptsize{6}}\Tt $\DS_1 = (\DS_1 \cup \s{S}) \setminus \s{S_1}$ ;\}\\
\> - - - - - - - - - - - - - - -  \\
\tb{\scriptsize{7}}\> $C_0 = pick\_falsif\_clause(F,\pnt{q_0})$; \\
\tb{\scriptsize{8}}\> $C_1 = pick\_falsif\_clause(F,\pnt{q_1})$; \\
\tb{\scriptsize{9}}\> if (($C_0 \neq \mi{nil}$) and ($C_1 \neq \mi{nil}$)) \{\\
\tb{\scriptsize{10}}   \Tt  $C = \mi{resolve}(C_0,C_1,v)$;  \\
\tb{\scriptsize{11}}   \Tt  $F = F \cup \s{C}$; \\
\tb{\scriptsize{12}}    \Tt  $\DS_1 = \DS_1\!\cup\!\s{\mi{falsif\_clause\_Dseq}(C,v)}$; \\
\tb{\scriptsize{13}}\> else \\
\tb{\scriptsize{14}}   \Tt $\DS_1 = \DS_1\!\cup\!\s{\mi{monot\_var\_Dseq}(F,v,\pnt{q})}$; \\
\tb{\scriptsize{15}}\> return($F,\DS_1$); \}\\
\end{tabbing} 
\vspace{-25pt}
\caption{\ti{merge} procedure}
\label{fig:merge}
\end{wrapfigure}

%
%
\subsection{Termination conditions}
\label{subsec:termination}
\slv~reports unsatisfiability if the current formula $F$ contains an empty clause (line 1 of Figure~\ref{fig:high_level_descr}).
\slv~reports satisfiability if no clause of $F$ is falsified by the current assignment \pnt{q} and
every variable of $F$ is either assigned in \pnt{q} or proved redundant in subspace \pnt{q} (line 10). Note that \slv~uses slight optimization
here by terminating before  
the  D-sequent $\emptyset \rightarrow X$ is derived stating unconditional redundancy of variables of $X$ in \prob{X}{F}.

If no termination condition is  met but  every variable of $F$ is assigned or proved redundant,
\slv~ends the current call and returns  $F$ and \DS~(lines 7,11).  
In contrast to operator \ti{return}, the operator \ti{exit} used in lines 1,10 eliminates
the entire stack of nested calls of \slv.

%
%
\subsection{Derivation of atomic D-sequents}
\label{subsec:atomic_dseqs}
Henceforth, for simplicity, we will assume that \slv~ derives \tb{D-sequents of the form} {\boldmath \Dds{r}{\s{v}}} i.e.
for single variables.
A D-sequent \Dds{r}{Z} is then  represented as $|Z|$ different D-sequents \Dds{r}{\s{v}}, $v \in Z$. 

In the two cases below, variable redundancy is obvious. Then \slv~derives D-sequents we will call \tb{atomic}.
The first case, is when clause of $F$ is falsified by the current assignment \pnt{q}. This kind of D-sequents is 
derived by procedures \ti{update\_Dseqs} (line 4) and \ti{finish\_Dseqs}(line 6). Let $v$ be the variable
assigned in \pnt{q} most recently. Let $C$ be a clause of $F$ falsified after the current assignment to $v$ is made.
If $v$ is assigned a first value (left branch), then, as we mentioned in Subsection~\ref{subsec:eager_lazy},
for \ti{some} unassigned variables $w_1,\ldots,w_m$ that are not proved redundant yet, one can build D-sequents 
\Dds{r}{\s{w_1}},...,\Dds{r}{w_m}. 
Here \pnt{r} is the shortest assignment falsifying $C$. So \ti{update\_Dseqs} may leave some unassigned variables 
non-redundant.
 On the contrary, \ti{finish\_Dseqs} is called in the right branch of $v$. In this case, for \ti{every} unassigned variable $w_i$
that is not proved redundant yet, D-sequent \Dds{r}{\s{w_i}} is generated.
 So on exit from \ti{finish\_Dseqs}, every variable of $F$ is
either assigned or proved redundant.

D-sequents of monotonic variables are the second case of atomic D-sequents.  They are  generated by procedure \ti{monot\_vars\_Dseqs}
 (line 8) and by procedure \ti{monot\_var\_Dseq} called when \slv~merges results of branches 
 (line 14 of Figure~\ref{fig:merge}). Let \pnt{q} be the current partial assignment and $v$ be a monotone unassigned variable of $F$.
Assume for the sake of clarity, that only clauses with positive polarity of $v$ are present in \cof{F}{q}. This means
that every clause of $F$ with literal $\overline{v}$ is either satisfied by \pnt{q} or contains  a variable $w$ proved
redundant in \cof{F}{q}. Then \slv~generates D-sequent \Dds{r}{\s{v}} where $\pnt{r}$ is formed from assignments of \pnt{q} as follows. 
For every clause $C$ of $F$ with 
literal $\overline{v}$  assignment \pnt{r}  a) contains an assignment satisfying $C$ or b)  contains all the assignments
of \pnt{s} such that D-sequent \Dds{s}{\s{w}} is active and $w$ is a variable of $C$. Informally, \pnt{r} contains a set of assignments
under which variable $v$ becomes monotone.
%
%
\subsection{Branching in \slv}
\label{subsec:branching}

When \slv~cannot prove redundancy of some unassigned variables in the current subspace \pnt{q}, it picks
a non-redundant variable $v$  for branching (line 12 of Figure~\ref{fig:high_level_descr}).  
First, \slv~ calls itself with assignment $\pnt{q_0}=\pnt{q} \cup \s{(v=0)}$. 
(Figure~\ref{fig:high_level_descr}  shows the case when 
assignment $v=0$ is explored in the left branch but obviously the assignment $v=1$ can be explored before $v=0$.)
Then \slv~partitions the returned  set of D-sequents $\DS_0$ into \Sup{\DS}{sym} and \Sup{\DS}{asym}.

The set \Sup{\DS}{sym} consists of the D-sequents \Dds{r}{\s{w}} of $\DS_0$ such that $v \not\in \Va{r}$. The D-sequents
of \Sup{\DS}{sym} remain active in the branch $v=1$. The set \Sup{\DS}{asym} consists of the
D-sequents \Dds{r}{\s{w}} such that \pnt{r} contains assignment $(v=0)$. The D-sequents of \Sup{\DS}{asym} are inactive
in the subspace $v=1$ and the variables whose redundancy is stated by those D-sequents reappear in the right branch.
If  $\Sup{\DS}{asym}=\emptyset$, there is no reason to  explore the right branch. So, \slv~just returns the set of D-sequents $\DS_0$
(line 16). Otherwise, \slv~recovers the variables and clauses that were marked redundant after D-sequents from  \Sup{\DS}{asym} were derived (line 17)
and calls itself with partial assignment $\pnt{q_1}=\pnt{q} \cup \s{(v=1)}$.
%
%
\subsection{Merging results of branches}
\label{subsec:merging}
After both branches of variable $v$ has been explored, \slv~merges the results by calling the \ti{merge} procedure (line 20).
The pseudocode of \ti{merge} is shown in Figure~\ref{fig:merge}. \slv~backtracks only when every unassigned variable is proved
redundant in the current subspace. The objective of \ti{merge} is to maintain this invariant by a) replacing
the currently D-sequents that depend on the branching variable $v$ with those that are symmetric in $v$;
b) building  a D-sequent for the branching variable $v$ itself.

The \ti{merge} procedure consists of two parts separated in Figure~\ref{fig:merge} by the dotted line. In the first part, \ti{merge}
builds  D-sequents for the variables of  $X \setminus (\Va{q} \cup \s{v})$. In the second part, it builds
a D-sequent for the branching variable. In the first part, \ti{merge} iterates over variables $X \setminus (\Va{q} \cup \s{v})$.
Let $w$ be a variable of $X \setminus (\Va{q} \cup \s{v})$. If the current D-sequent for $w$ (i.e. the D-sequent for $w$ from the set $\DS_1$
returned in the right branch) is symmetric in $v$, then there is no need to build a new D-sequent (line 2).  Otherwise, a new D-sequent $S$
for $w$ that does not depend on $v$ is generated as follows. Let $S_0$ and $S_1$ 
be the D-sequents  for variable $w$ contained in $\DS_0$ and $\DS_1$ respectively (lines 3,4). That is $S_0$ and $S_1$ were
generated for variable $w$ in branches $v=0$ and $v=1$.  Then  D-sequent $S$  is produced
by joining $S_0$ and $S_1$ at variable $v$ (line 5).

\setlength{\intextsep}{4pt}
\begin{wrapfigure}{l}{2.5in}
 \begin{center}
    \includegraphics{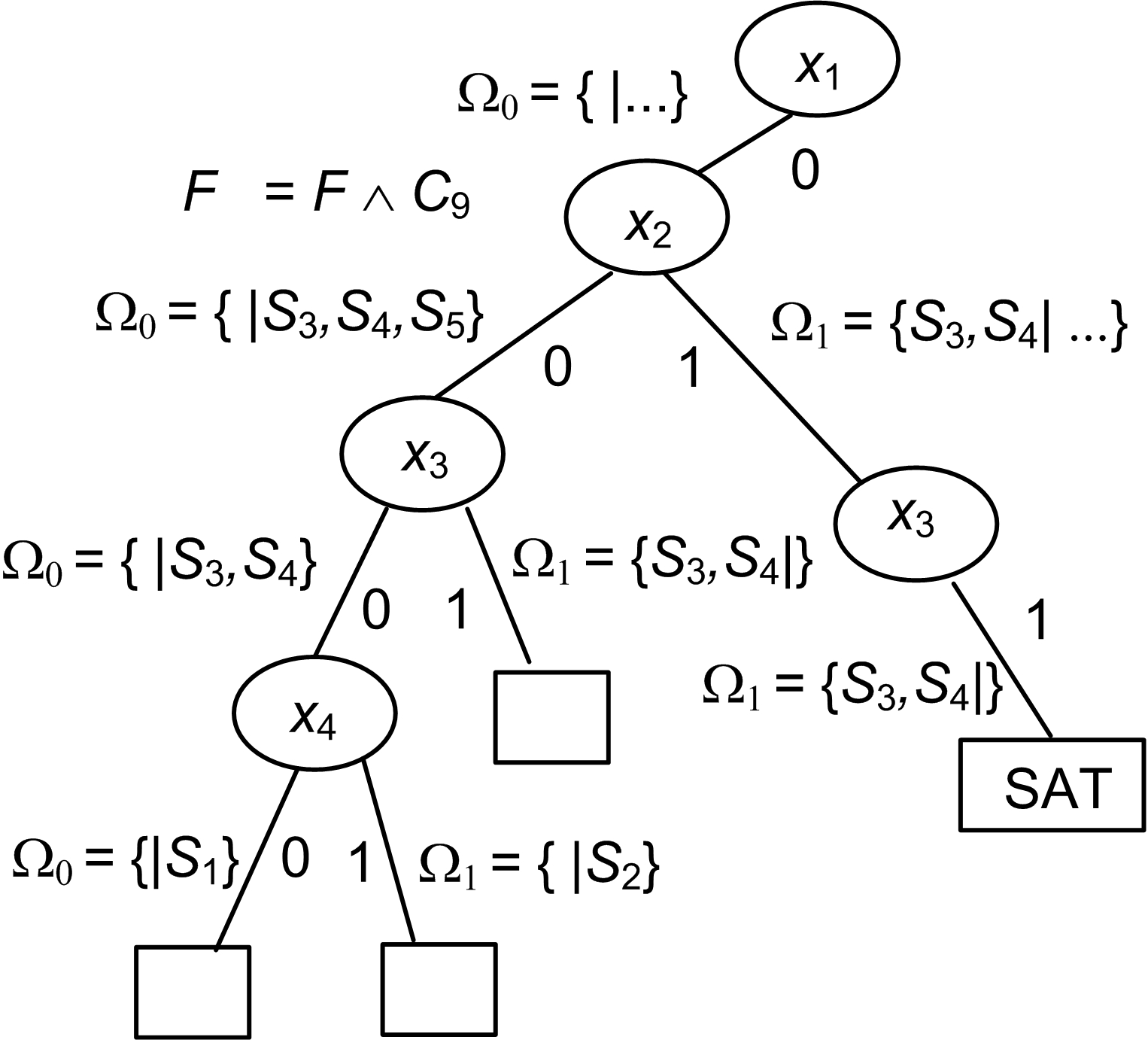}
  \end{center}
\caption{Search tree built by \slv}
\label{fig:tree}
\end{wrapfigure}

Generation of  a D-sequent for the variable $v$ itself depends on whether node $v$ (i.e the node of the search tree corresponding to $v$) 
is a conflict one. If so, $F$ contains clauses $C_0$ and $C_1$ that have variable $v$ 
and are falsified by \pnt{q_0} and \pnt{q_1} respectively.  In this case, to make variable $v$ redundant \ti{merge} generates
the resolvent $C$ of $C_0$ and $C_1$ on variable $v$ and adds $C$ to $F$ (lines 10,11).  Then D-sequent \Dds{r}{\s{v}} is generated where
\pnt{r} is the shortest assignment falsifying clause $C$ (line 12). 

%
%
\begin{wrapfigure}{L}{2in}
\small
\begin{tabbing}
aa\=bb\=cc\ dd\= \kill
$S_1:$ \Tt $(x_1=0,x_4=0) \rightarrow \s{x_5}$ \\
$S_2:$ \Tt $(x_1=0,x_4=1) \rightarrow \s{x_5}$ \\
$S_3:$ \Tt $(x_1=0) \rightarrow \s{x_5}$ \\
$S_4:$ \Tt $(x_1=0) \rightarrow \s{x_4}$ \\
$S_5:$ \Tt $(x_1=0,x_2=0) \rightarrow \s{x_3}$ \\
\end{tabbing} 
\vspace{-25pt}
\caption{D-sequents of Figure~\ref{fig:tree}}
\label{fig:derived_dseqs}
\end{wrapfigure}

If node $v$ is not a conflict one, this means that clause $C_0$ and/or clause $C_1$ does not exist.
Suppose, for example, that no clause $C_0$   containing variable  $v$ is falsified by \pnt{q_0}.
This means that every clause $F$ with the positive literal of $v$ is either satisfied by \pnt{q}
or contains a variable redundant in subspace \pnt{q}. In other words, $v$ is monotone in \cof{F}{q} after removing
the clauses with redundant variables.
Then an atomic D-sequent is generated by \ti{merge} (line 14) as described in Subsection~\ref{subsec:atomic_dseqs}.

\begin{example}
\label{exmp:alg_description}
Here we show how \slv~with \ti{lazy backtracking} operates when solving the CNF formula $F$ introduced in Example~\ref{exmp:brief_comparison}.
Formula $F$  consists of 8 clauses:  $C_1=\overline{x}_1 \vee \overline{x}_3$,
$C_2 = \overline{x}_2 \vee x_3$, $C_3 = x_1 \vee x_2 \vee x_3$, $C_4 = x_2 \vee \overline{x}_3$,
$C_5 = \overline{x}_1 \vee x_4 \vee x_5$,  $C_6 = x_4 \vee \overline{x}_5$, $C_7 = \overline{x}_4 \vee x_5$, 
$C_8= \overline{x}_1 \vee \overline{x}_4 \vee \overline{x}_5$. Figure~\ref{fig:tree} shows the search tree built by
\slv. The 
ovals specify the branching nodes labeled by the corresponding  branching variables. The label 0 or 1 on the edge connecting two nodes
specifies the value made to the variable of the higher node. The rectangles specify the leaves of the search tree.
The rectangle SAT specifies the leaf where \slv~reported that $F$ is satisfiable.

Every edge of the search tree labeled with value 0 (respectively 1)  also shows the set of D-sequents $\DS_0$ (respectively $\DS_1$)
derived when the assignment corresponding to this edge was made. The D-sequents produced by \slv~ are denoted in Figure~\ref{fig:tree} as $S_1,\ldots,S_5$.
The values of $S_1,\ldots,S_5$ are given  in Figure~\ref{fig:derived_dseqs}. When representing $\DS_0$ and $\DS_1$, we use the  symbol '$|$' to
separate D-sequents derived before and after a call of \slv. Consider for example, the set $\DS_0= \s{|S_3,S_4,S_5}$ on the path
$x_1=0,x_2=0$. The set of D-sequents listed before '$|$' is empty in $\DS_0$. This means that no D-sequents had been derived 
when \slv~ was called with $\pnt{q}=(x_1=0,x_2=0)$. On the exit of this invocation of \slv, D-sequents $S_3,S_4,S_5$ were derived.
We use ellipsis after symbol '$|$' for the calls of \slv~ that were not finished by the time $F$ was proved satisfiable.

Below, we use Figures~\ref{fig:tree} and~\ref{fig:derived_dseqs} to illustrate various aspects of the work of \slv.

\ti{Leaf nodes} correspond to subspaces where every variable is either assigned or proved redundant. For example,
the node on the path $(x_1=0,x_2=0,x_3=0,x_4=0)$ is a leaf because $x_1,x_2,x_3,x_4$ are assigned and
 $x_5$ is proved redundant.

\ti{Atomic D-sequents}. D-sequents $S_1,S_2,S_4,S_5$ are atomic. For example, the D-sequent $S_1$ is derived in
subspace $x_1=0,x_2=0,x_3=0,x_4=0$ due to  $x_5$ becoming  monotone. 
$S_1$ is equal to $(x_1=0,x_4=0) \rightarrow \s{s_5}$ because only assignments $x_1=0,x_4=0$ are responsible
for  the fact that  $x_5$ is monotone.

\ti{Branching in the presence of a conflict}. On the path $x_1=0,x_2=0$, clauses $C_3$ and $C_4$ turned into
unit clauses $x_3$ and $\overline{x}_3$ respectively. So no matter how first assignment to $x_3$ was made,
one of these two clauses would get falsified. \slv~made first assignment $x_3=0$ and falsified clause $C_3$. Since this
was the \ti{left branch} of $x_3$, \slv~proceeded further to branch on variable $x_4$.

\ti{Merging results of branches}. When branching on variable $x_4$, \slv~derived sets $\DS_0=\s{S_1}$ and $\DS_1=\s{S_2}$
where $S_1$ is equal to  $(x_1=0,x_4=0) \rightarrow \s{x_5}$ and $S_2$ is equal to  $(x_1=0,x_4=1) \rightarrow \s{x_5}$.
\slv~merged the results of branching by joining $S_1$ and $S_2$ at the branching variable $x_4$. The resulting D-sequent $S_3$ equal to
 $(x_1=0) \rightarrow \s{x_5}$ does not depend on $x_4$.

\ti{D-sequents for branching variables.} \slv~generated D-sequents for branching variables  $x_4$ and $x_3$.
Variable $x_4$ was monotone in subspace $x_1=0,x_2=0,x_3=0$ because the clauses $C_5$,$C_6$ containing the positive literal of $x_4$ were
not present in this subspace.  $C_5$ was satisfied by assignment $x_1=0$ while  $C_6$ contained variable $x_5$
whose redundancy was stated by D-sequent $S_3$ equal to  $(x_1=0) \rightarrow \s{x_5}$.
So the D-sequent $S_4$ equal to $(x_1=0) \rightarrow \s{x_4}$ was derived.

Variable $x_3$ was not monotone in subspace $\pnt{q}=(x_1=0,x_2=0)$ because, in this subspace, clauses $C_3$ and $C_4$ turned into unit clauses
$x_3$ and $\overline{x}_3$ respectively. So first, \slv~\ti{made} variable $x_3$ redundant by adding to $F$ clause $C_9= x_1 \vee x_2$ 
obtained by resolution of  $C_3$ and $C_4$ on $x_3$.  Note that $C_9$ is falsified in subspace \pnt{q}.  So the D-sequent $S_5$ equal
to $(x_1=0,x_2=0) \rightarrow \s{x_3}$ was generated.

\ti{Reduction of the size of  right branches.} In the left branch of node $x_2$, the set of D-sequents $\DS_0=\s{S_3,S_4,S_5}$ was
derived.  D-sequent $S_5$  equal to $(x_1=0,x_2=0) \rightarrow \s{x_3}$  is not symmetric in $x_2$ (i.e. depends on $x_2$).
On the other hand, $S_3$ and $S_3$ stating redundancy of $x_4$ and $x_5$ are symmetric in $x_2$.  So only D-sequent $S_5$ was
inactive in the right branch $x_2=1$ . So only variable $x_3$ reappeared in this branch while $x_4$,$x_5$ remain redundant.

\ti{Termination}. In subspace  $\pnt{q}=(x_1=0,x_2=1,x_3=1)$, every variable of $F$ was assigned or redundant
and no clause of $F$ was falsified by \pnt{q}. So \slv~terminated reporting that $F$ was satisfiable.
\end{example}
%
%
\subsection{Correctness of \slv}
The proof of correctness of \slv~can be performed 
by induction on the number of derived D-sequents.  Since such a proof is very similar to
the proof of correctness of   the quantifier elimination
algorithm we gave  in~\cite{tech_rep1}, we omit it here. Below we just list the facts on which
this proof of correctness is based.
\begin{itemize}
\item \slv~derives correct atomic D-sequents.
\item  D-sequents obtained by the join operation are correct.
\item \slv~correctly reports satisfiability when every clause is either satisfied or
proved redundant in the current subspace because D-sequents stating redundancy of variables are correct.
\item New clauses added to the current formula are obtained by resolution and so are correct.
So \slv~correctly reports unsatisfiability when an empty clause is derived.

\end{itemize}

\section{\slv~on Compositional Formulas}
\label{sec:comp_formulas}

In this section, we consider the performance of \slv~on compositional formulas. We will say that a satisfiability checking \tb{algorithm is compositional}
if its complexity is \ti{linear} in the number of subformulas forming a compositional formula.
 We prove that \slv~with lazy backtracking is compositional and  argue that   DPLL-based SAT-solvers are not.

We say that \tb{a formula} {\boldmath $F(X)$} \tb{is compositional} if it can be represented as $F_1(X_1) \wedge \ldots \wedge F_k(X_k)$
where $X_i \cap X_j = \emptyset, i \neq j$. The motivation for our interest in such formulas is as follows.
As we mentioned in Section~\ref{sec:comparison}, a practical formula $F$ typically can be represented 
as $F_1(X_1,Y_1) \wedge \ldots \wedge F_k(X_k,Y_k)$
where $X_i$ are internal variables of $F_i$ and $Y_i$ are communication variables.
 One can view compositional formulas as a degenerate case
where $|Y_i|=0,i=1,\ldots k$ and so $F_i$ do not talk to each other.
 Intuitively, an algorithm that does not scale well even when
$|Y_i|=0$  will not scale well when  $|Y_i| > 0$.

From now on, we \tb{narrow down the definition of compositional formulas} as follows. We will call formula 
 $F_1(X_1) \wedge \ldots \wedge F_k(X_k)$ compositional if  $X_i \cap X_j = \emptyset, i \neq j$ and all subformulas
$F_i$, $i=1,\ldots,k$ are equivalent modulo variable renaming/negation. That is $F_j$ can be obtained
from $F_i$ by renaming some variables of $F_i$ and then negating some variables of the result of   variable renaming.

\begin{proposition}
\label{prop:compos_form}
Let $F(X)=F_1(X_1) \wedge \ldots \wedge F_k(X_k)$ be a compositional formula. Let $T$ be the search tree built
by \slv~with lazy backtracking when checking the satisfiability of $F$.  The size of $T$
is linear in $k$ no matter how decision variables are chosen. (A variable $v \in X$ is a decision one
if no clause of $F$ that is unit in the current subspace contains $v$.)
\end{proposition}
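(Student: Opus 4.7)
The strategy is to exploit the variable-disjointness of $X_1,\ldots,X_k$ to show that, under lazy backtracking, the work \slv~does on each subformula is bounded by a constant independent of $k$, so that the total search tree $T$ has size $O(k)$. The starting observation is that because $X_i \cap X_j = \emptyset$ for $i \neq j$, every clause of $F$--whether an original clause or a resolvent added by \ti{merge} at line 11 of Figure~\ref{fig:merge}--lies entirely inside one $F_i$; in particular a resolvent on a branching variable $v \in X_i$ contains only variables of $X_i$.

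The key structural step is a locality lemma: for every D-sequent \Dds{r}{\s{w}} produced by \slv~where $w \in X_i$, we have $\Va{r} \subseteq X_i$. I would prove this by induction on the derivation. Atomic D-sequents from falsified clauses and from monotonicity (procedures \ti{update\_Dseqs}, \ti{monot\_vars\_Dseqs}, \ti{monot\_var\_Dseq}) satisfy this because the clauses involved lie in $F_i$, and, in the monotonicity case, the supporting D-sequents are inductively confined to $X_i$. The join performed in \ti{merge} drops the branching variable but preserves $X_i$-membership, and the conflict-resolvent D-sequent at line 12 of Figure~\ref{fig:merge} inherits the subformula label of its underlying resolvent. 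Lazy backtracking enters here in the essential way: since \ti{update\_Dseqs} produces no D-sequents on conflicts in a left branch, the only mechanism for creating cross-subformula D-sequents is \ti{finish\_Dseqs}, and those are handled as a special case.

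From the locality lemma, branching on $v \in X_i$ can invalidate (place into \Sup{\DS}{asym}) only D-sequents whose $\Va{r}$ contains $v$, hence only D-sequents of variables in $X_i$. Consequently, variables of $X_j$, $j \neq i$, once proved redundant, remain redundant during any subsequent branching on $X_i$-variables. This non-interference property implies that the subsequence of decisions that $T$ makes on variables of $X_i$ simulates, step for step, a run of \slv~on $F_i$ alone with that induced decision order: BCP triggered by a unit clause of $F_i$, conflict detection, and monotonicity detection for variables of $X_i$ all depend only on $F_i$ and on the restriction of the current assignment to $X_i$. Since all $F_i$ are equal modulo renaming/negation and $|F_i|$ is a fixed constant $c$ independent of $k$, there is a constant $t$ (depending only on $F_1$ and on how decision variables are chosen inside an $X_i$) bounding the number of nodes of $T$ labelled by variables of any one $X_i$. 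Each node of $T$ is labelled by a variable of exactly one $X_i$, so $|T| \le kt = O(k)$.

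The main obstacle is the locality lemma's case for \ti{finish\_Dseqs}. On a conflict with a clause of $F_i$, this procedure attaches an assignment \pnt{r} over $X_i$ to D-sequents for \emph{every} unassigned non-redundant variable, including some $w \in X_j$ with $j \neq i$; such a D-sequent is merely ``virtually'' redundant in \cof{F}{r} because $F$ itself is unsatisfiable there. I would need to show that these cross-subformula D-sequents are harmless: either they are purged by subsequent \ti{merge} operations that join them with a symmetric counterpart from the sibling branch to produce a clean $X_j$-local D-sequent, or, if they trigger a would-be right-branch re-exploration of an $X_j$-variable, that re-exploration terminates at once via the empty-clause check at line 1 of \Slv. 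Making this bookkeeping precise, and verifying that it is exactly lazy backtracking (rather than eager) that prevents a super-linear blowup, is where the argument becomes technical.
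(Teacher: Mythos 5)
Your core lemma --- every D-sequent \Dds{r}{\s{w}} derived with $w \in X_i$ has $\Va{r} \subseteq X_i$ --- is exactly the paper's key step (the paper calls such a D-sequent ``limited to subformula $F_i$''). Where you diverge is the final count: the paper does not partition the nodes of $T$ by subformula, but uses locality only to bound the number of variables reappearing in any right branch by $d = |\V{F_1}|$, hence the size of any right branch by $2^d$, and concludes $|T| \le |X|\cdot 2^d$, which is linear in $k$ because $|X| = kd$. Your per-subformula simulation count ($|T| \le kt$) is a legitimate alternative and gives a comparable bound, but it leans much harder on the non-interference property, which in turn stands or falls with the locality lemma.

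That is why the loose end you flag around \ti{finish\_Dseqs} cannot be left as ``technical bookkeeping,'' and why your proposed patch goes in the wrong direction. If a cross-subformula D-sequent were ever actually created, the locality lemma would be false, and with it non-interference and your count (flipping $v \in X_i$ could then resurrect variables of some $X_j$, replicating $X_j$-work inside the $X_i$ subtree); showing such D-sequents are ``harmless'' or ``purged later'' does not restore the inductive invariant you need at every intermediate node. The correct resolution is that \ti{finish\_Dseqs} never emits a cross-subformula D-sequent, and this follows from the same induction. It fires only in a \emph{right} branch of a variable $u$, on a clause $C$ newly falsified by the assignment to $u$, so $C$ contains $u$ and $\V{C} \subseteq \V{F_i}$ for the $F_i$ with $u \in X_i$; and \slv~enters that right branch with every unassigned variable still redundant except those whose D-sequents lay in \Sup{\DS}{asym}, i.e., whose left-hand sides contain the first assignment to $u$ --- by the inductive hypothesis these are all variables of $X_i$. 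Hence both the shortest assignment \pnt{r} falsifying $C$ and every variable $w$ for which \ti{finish\_Dseqs} records \Dds{r}{\s{w}} lie in $\V{F_i}$, and locality is preserved. (The paper's own proof passes over this case silently, treating only monotone-variable D-sequents, the branching-variable D-sequent from a falsified resolvent, and joins, so your catch is a fair one; but the fix is available inside your induction rather than outside it.) With that case closed, your argument goes through.
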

\vspace{-5pt}
\begin{proof}
We will call a D-sequent \Dds{r}{\s{v}} \tb{limited to subformula} {\boldmath $F_i$} if $(\Va{r} \cup \s{v}) \subseteq \V{F_i}$.
The idea of the proof is to show that every D-sequent derived by \slv~is limited to a subformula $F_i$. 
Then the size of $T$ is limited by $|X| \cdot 2^d$ where $d = |\V{F_1}|=\ldots=|\V{F_k}|$.
Indeed, when \slv~flips the value of a variable $v$, only variables whose D-sequents depend on $v$ reappear
in the right branch of $v$. Since all D-sequents derived by \slv~are limited to  a subformula, 
the D-sequents depending on $v$ are  limited to subformula $F_i$ such that $v \in \V{F_i}$.
This means that the number of variables that reappear in the right branch is limited by $d$. So the number
of nodes of a right branch of $T$ cannot be larger than $2^d$. Hence  the size of $T$ cannot be larger
than  $|X| \cdot 2^d$ where $|X|$ is the maximum possible depth of $T$.

Now let us prove that every D-sequent derived by \slv~is indeed limited to a subformula $F_i$. Since subformulas
$F_i$,$F_j$, $i\neq j$ do not share variables, for any non-empty resolvent clause $C$, it is true that $\V{C} \subseteq  \V{F_i}$
for some $i$. Then any atomic D-sequent built for a monotone variable $v$ (see Subsection~\ref{subsec:atomic_dseqs}) is limited  
to the formula $F_i$ such that $v \in \V{F_i}$. \slv~builds an atomic D-sequent of another type   when a clause $C$ 
produced by resolution on branching variable $v$ is falsified in the current subspace. This D-sequent  has the form
\Dds{r}{\s{v}} where \pnt{r} is the shortest assignment falsifying  $C$.
Since $(\V{C} \cup \s{v}) \subseteq \V{F_i}$ where $F_i$ is the subformula containing $v$, such a D-sequent is limited to $F_i$.
Finally, a D-sequent obtained by joining D-sequents limited to $F_i$ is limited to $F_i$ $\square$
\end{proof}

Let \cl~be a DPLL-based algorithm with clause learning.  \cl~cannot solve compositional formulas $F_1 \wedge \ldots \wedge F_k$
in the time linear in $k$ for an arbitrary  choice of decision variables. Since every resolvent
clause can have only variables of one subformula $F_i$, the total number of clauses generated by \cl~is linear in $k$.
However, the time \cl~has to spend to derive one clause is also linear $k$. 
When a conflict occurs, \cl~backtracks to the  decision level that is \ti{relevant} to the conflict and  is the closest to the conflict level.
In the worst case, \cl~has to undo assignments of all $k$ subformulas.  So in the worst case, the complexity  of \cl~is quadratic in $k$.

Notice that the DP procedure is compositional because clauses of different subformulas cannot
be resolved with each other. However, as we mentioned in the introduction, this procedure is limited to one
global variable order in which variables are eliminated. This limitation is the main reason why the DP procedure is
outperformed by DPLL-based solvers. 
On the contrary, \slv~is a branching algorithm that can use different variable orders in different branches (and DPLL-based
SAT-solvers are a special case of \slv). So the machinery of D-sequents allows one to enjoy  the flexibility of branching
still preserving the compositionality of the algorithm.

\section{Skipping Right Branches}
\label{sec:skipping}
\setlength{\intextsep}{2pt}
\setlength{\textfloatsep}{2pt}
\begin{wrapfigure}{L}{2.3in}
\small
\vspace{-10pt}
\begin{tabbing}
aaaaa\=bb\=cc\=dd\= \kill
\Slv($F$,\pnt{q},\DS)\{\\
  \> ..... \\
\tb{\scriptsize{16}}  \> if ($\Sup{\DS}{asym} = \emptyset$)  return($F,\DS_0$);\\
\tb{\scriptsize{16.1}}\> if ($\mi{decision\_var}(\pnt{q_0},v,F)$) \\
\tb{\scriptsize{16.2}}\Tt if ($\mi{no\_new\_falsif\_clause}(\pnt{q_0},F)$)\{\\
\tb{\scriptsize{16.3}}\ttt   $S:= \mi{branch\_var\_Dseq}(F,v,\pnt{q})$; \\
\tb{\scriptsize{16.4}}\ttt   \DS:= $\mi{recomp\_Dseqs}(\DS_0,S,\pnt{q_0},F)$; \\
\tb{\scriptsize{16.5}}\ttt    return($F,\DS \cup \s{S}$); \}\\
\tb{\scriptsize{17}}  \> $\mi{recover\_vars\_clauses}(F,\Sup{\DS}{asym})$; \\
 \> .... \\

\tb{\scriptsize{21}}\> return($F,\DS$);\} \\
\end{tabbing} 
\vspace{-25pt}
\caption{modified \slv~procedure}
\label{fig:modification}
\end{wrapfigure}

In this section, we describe an optimization technique that can be used  for additional pruning the search tree built by \slv.
We will refer to this technique as SRB (Skipping Right Branches). The essence of SRB is that in some situations, \slv~
can use the D-sequents produced in the left branch of a variable $v$ to build D-sequents 
that do not depend on $v$ without exploration of the right branch of $v$.

This section is structured as follows. Subsection~\ref{subsec:modified_proc} gives pseudocode of the modification of \slv~with SRB.
Generation of D-sequents that do not depend on the current branching variable is explained in Subsection~\ref{subsec:generated_Dseqs}.
Some notions introduced in~\cite{tech_rep1} are recalled in Subsection~\ref{subsec:old_notions}. These notions
are used in Subsection~\ref{subsec:correctness_of_Dseqs} to prove that the D-sequents derived by the modified part of \slv~are correct.

\subsection{Modified \slv}
\label{subsec:modified_proc}
The modification   of \slv~ due to adding the SRB technique is shown in Figure~\ref{fig:modification} (lines 16.1-16.5).
SRB works as follows. Suppose that \slv~has backtracked from
the the left branch of $v$. Let \pnt{q} be the set of assignments made by \slv~before variable $v$.
 We will follow the assumption of Figure~\ref{fig:high_level_descr}
that the first value assigned to $v$ is  0. In such a case, \slv~of  Figure~\ref{fig:high_level_descr} just explores the
right branch $v=1$ (line 19). The essence of \slv~with SRB is that if a condition described below is satisfied,
the right branch is skipped. Instead, \slv~does the following. First, it builds a correct D-sequent of the
branching variable $v$ depending only on assignments to \pnt{q} (line 16.3). Then,  every D-sequent \Dds{r}{\s{w}} of $\DS_0$ 
where \pnt{r} contains assignment $(v=0)$ is recomputed (line 16.4).

Let \pnt{q_0} denote assignment \pnt{q} extended by $(v=0)$.
The condition under which the SRB technique is applicable is that 
no clause of $F$ having literal $v$ (i.e. the positive literal of variable $v$) is falsified by \pnt{q_0}. This means
that every clause with literal $v$ is either satisfied by \pnt{q} or has a variable that
is redundant in subspace \pnt{q_0}. This condition is checked on line 16.2.

The SRB technique is used in the modification of \slv~shown in Figure~\ref{fig:modification} only if 
$v$ is a decision variable (line 16.1). The reason is as follows.
Suppose that this is not the case, i.e. $v$ is in a unit clause $C$ of \cof{F}{q}.
In this case, in the left branch (respectively right branch), \slv~assigns $v$ the value that satisfies  $C$
(respectively falsifies $C$). But since \slv~immediately backtracks if a new clause gets falsified in the right
branch, pruning the left branch in this case does not save any work.

\subsection{D-sequents generated by  modified \slv}
\label{subsec:generated_Dseqs}
Let $F(X)$ be a CNF formula.
Let \pnt{q} be a partial assignment to variables of $X$.  Let $v$ be a variable of $X \setminus \Va{q}$.
Let \pnt{q_0} denote the assignment $\pnt{q} \cup \s{(v=0)}$. Let $\DS_0$ be a set of D-sequents
active in the subspace specified by \pnt{q_0}. Let every clause of $F$ that has literal
 $v$ is either satisfied by \pnt{q} or has a variable whose redundancy is stated by a D-sequent
of $\DS_0$.

The procedure \ti{branch\_var\_Dseq} of Figure~\ref{fig:modification} generates D-sequent \Dds{r}{\s{v}} such
that for every clause $C$ containing literal $v$
\begin{itemize} 
     \item $C$ is satisfied by \pnt{r} or
\item $C$ contains a variable $w$ whose redundancy is stated by a  D-sequent \Dds{s}{\s{w}} of $\DS_0$
      and $\pnt{s} \leq (\pnt{r} \cup \s{(v=0)}$.
\end{itemize}

The procedure \ti{recomp\_Dseqs} of Figure~\ref{fig:modification} works as follows.
For every D-sequent \Dds{e}{\s{w}} of $\DS_0$ such that \pnt{e} contains assignment $(v=0)$,
\ti{recomp\_Dseqs} generates a  D-sequent \Dds{e'}{\s{w}}. The assignment \pnt{e'} is obtained from \pnt{e} by replacing assignment
       $(v=0)$ with  the assignments of \pnt{r} of the D-sequent \Dds{r}{\s{v}} generated for the branching variable $v$.

\subsection{Recalling some notions}
\label{subsec:old_notions}
In this subsection, we recall some notions introduced in~\cite{tech_rep1} that are used in the proofs of Subsection~\ref{subsec:correctness_of_Dseqs}.
Let $F(X)$ be a CNF formula. We will refer to a complete assignment to variables of $X$ as a \tb{point}.
A point \pnt{p} is called a \pnt{Z}\tb{-boundary point} of $F$ if 
\begin{itemize}
\item \pnt{p} falsifies $F$
\item every clause of $F$ falsified by \pnt{p} contains a variable of $Z$
\item $Z$ is minimal i.e. no proper subset of $Z$ satisfies the property above
\end{itemize} 

A $Z$-boundary point \pnt{p} is called \pnt{Y}\tb{-removable} in $F$ where $Z \subseteq Y \subseteq X$ if \pnt{p}
cannot be turned into an assignment satisfying $F$ by changing values of variables of $Y$. If a $Z$-boundary point
is $Y$-removable, then one can produce a clause $C$ that is 
a) falsified by \pnt{p}; b) implied by $F$ and c) does not have any variables of $Z$.
After adding $C$ to $F$, \pnt{p} is not a $Z$-boundary point anymore, hence the name removable.

We will
call a $Y$-removable  point \tb{just removable} if $Y = X$. It is not hard to see, that every $Z$-boundary point
of a satisfiable (respectively unsatisfiable) formula $F$ is unremovable (respectively removable).

\begin{proposition}
\label{prop:pnts_vars}
Let $F(X)$ be a CNF formula and \pnt{q} be a partial assignment to variables of $X$.
A set of variables $Z$ is not redundant in \prob{X}{F} in subspace \pnt{q}, if and only if
there is a $Z$-boundary point of \cof{F}{q} that is removable in $F$.
\end{proposition}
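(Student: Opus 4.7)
My plan is to unfold the definition of virtual redundancy and match each clause against the existence of a removable boundary point; the heart of the argument is the translation between ``$Z$ witnesses non-redundancy of $\cof{F}{q}$'' and ``there is a point whose falsified clauses of $\cof{F}{q}$ all carry a variable of $Z$.''

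For the forward direction, suppose the variables of $Z$ are not virtually redundant in $\cof{F}{q}$. Unpacking the definition in Subsection~\ref{subsec:virt_redund}, this means simultaneously (i) $\prob{X}{\cof{F}{q}} \not\equiv \exists X[\cof{F}{q}\setminus(\cof{F}{q})^Z]$, and (ii) $F$ is unsatisfiable. Since $\cof{F}{q}\setminus(\cof{F}{q})^Z \subseteq \cof{F}{q}$, the non-equivalence in (i) produces a complete assignment $\pnt{p}$ that satisfies $\cof{F}{q}\setminus(\cof{F}{q})^Z$ but falsifies $\cof{F}{q}$. Every clause of $\cof{F}{q}$ falsified by $\pnt{p}$ must therefore lie in $(\cof{F}{q})^Z$, i.e., mention a variable of $Z$; after narrowing $Z$ to the minimal subset preserving this property, $\pnt{p}$ is a $Z$-boundary point of $\cof{F}{q}$. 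By (ii), $F$ is unsatisfiable, and the observation just before the proposition (every $Z$-boundary point of an unsatisfiable formula is removable) gives that $\pnt{p}$ is removable in $F$.

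For the reverse direction, suppose $\pnt{p}$ is a $Z$-boundary point of $\cof{F}{q}$ that is removable in $F$. Since ``removable'' means $Y$-removable with $Y=X$, and $X = \V{F}$, no reassignment of any variables of $F$ turns $\pnt{p}$ into a satisfying assignment of $F$; this is only possible when $F$ itself is unsatisfiable. Because $\pnt{p}$ is a $Z$-boundary point, every clause of $\cof{F}{q}$ falsified by $\pnt{p}$ contains a variable of $Z$, so $\pnt{p}$ satisfies $\cof{F}{q}\setminus(\cof{F}{q})^Z$ while falsifying $\cof{F}{q}$. This is a direct witness that $\prob{X}{\cof{F}{q}} \not\equiv \exists X[\cof{F}{q}\setminus(\cof{F}{q})^Z]$. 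Combined with $F$ being unsatisfiable, this is precisely the negation of virtual redundancy for $Z$.

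The main obstacle I foresee is the minimality clause in the definition of a $Z$-boundary point. The natural point $\pnt{p}$ extracted from non-redundancy in the forward direction witnesses the boundary property relative to some $Z' \subseteq Z$, and one has to justify carefully that passing to this minimal witness still yields the statement as written (using the convention, adopted after Definition~\ref{def:red_vars}, that the subset we care about is precisely this minimal witness). Aside from that, the argument is a direct syntactic translation driven by the identity $(\cof{F}{q})^Z = $ the clauses of $\cof{F}{q}$ ``vulnerable'' to falsification through $Z$-variables, together with the collapse of $X$-removability into global unsatisfiability of $F$.
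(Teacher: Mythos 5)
The paper itself gives no proof of this proposition---it defers to the cited technical report---so there is nothing to compare against line by line; judged on its own, your argument is sound and is the intended one. You correctly recognize the two load-bearing facts: that ``redundant'' here must be read as \emph{virtually} redundant (under plain redundancy the statement would be false, e.g.\ when $F$ is satisfiable but \cof{F}{q} is not), and that unfolding its negation yields exactly ``non-equivalence \emph{and} $F$ unsatisfiable,'' which is what lets removability collapse to global unsatisfiability of $F$ in both directions. You also rightly flag the minimality clause: the point extracted in the forward direction is, strictly, a $Z'$-boundary point for some minimal $Z'\subseteq Z$, and the proposition has to be read with that convention (as the paper's own later uses of the notion confirm).

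One sentence in the reverse direction should be tightened. The fact that a single point \pnt{p} satisfies $\cof{F}{q}\setminus(\cof{F}{q})^Z$ while falsifying \cof{F}{q} is \emph{not} by itself a witness that the two existentially quantified (closed) formulas differ; you also need $\prob{X}{\cof{F}{q}}$ to be false, i.e.\ \cof{F}{q} to be unsatisfiable. That does follow in your argument---removability gives $F$ unsatisfiable, hence every cofactor is---but the non-equivalence should be attributed to the pair ``reduced formula satisfiable, \cof{F}{q} unsatisfiable'' rather than to the single point alone.
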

The proof of this proposition is given in ~\cite{tech_rep1}.

\subsection{Correctness of  D-sequents generated by modified \slv}
\label{subsec:correctness_of_Dseqs}
%
%

\begin{proposition}
\label{prop:SRB1}
The D-sequent generated by procedure \ti{branch\_var\_Dseq} of Figure~\ref{fig:modification} described in
 Subsection~\ref{subsec:generated_Dseqs} is correct.
\end{proposition}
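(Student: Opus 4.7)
The plan is to argue by contradiction using Proposition~\ref{prop:pnts_vars}, taking as standing hypothesis that every D-sequent already in $\DS_0$ is correct (the induction hypothesis in the standard induction on derivation order used throughout the paper). Suppose the D-sequent \Dds{r}{\s{v}} produced by $\mathit{branch\_var\_Dseq}$ is incorrect, i.e., $v$ is not virtually redundant in \cof{F}{r}. By the definition of virtual redundancy (Subsection~\ref{subsec:virt_redund}), this forces $F$ to be unsatisfiable and $\s{v}$ to fail to be actually redundant in \prob{X}{F} in subspace \pnt{r}. By Proposition~\ref{prop:pnts_vars} there exists a $\s{v}$-boundary point \pnt{p} of \cof{F}{r} that is removable in $F$ (automatically so, since with $F$ unsatisfiable every point is removable). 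View \pnt{p} as a complete assignment extending \pnt{r}; every clause of $F$ falsified by \pnt{p} lies in \cof{F}{r} and therefore contains variable $v$.

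I would first handle the case $\pnt{p}(v)=0$, which is directly aligned with how \pnt{r} is built: every clause falsified by \pnt{p} contains the positive literal of $v$. Applying the defining property of $\mathit{branch\_var\_Dseq}$ from Subsection~\ref{subsec:generated_Dseqs}, each such falsified clause $C$ either is satisfied by \pnt{r}---impossible, since $\pnt{p}\geq\pnt{r}$ falsifies $C$---or contains a variable $w_C$ with a D-sequent \Dds{s_C}{\s{w_C}} in $\DS_0$ satisfying $\pnt{s_C}\leq\pnt{r}\cup\{(v=0)\}\leq\pnt{p}$. By the standing hypothesis this D-sequent is correct, so, using that $F$ is unsatisfiable, $w_C$ is actually redundant in \cof{F}{s_C}; by Proposition~\ref{prop:pnts_vars} no $\s{w_C}$-boundary point of \cof{F}{s_C} is removable in $F$, and since $F$ is unsatisfiable no such boundary point can exist at all. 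The contradiction will then come from exhibiting \pnt{p} itself as a $\s{w_C}$-boundary point of \cof{F}{s_C} for an appropriate $C$. The case $\pnt{p}(v)=1$ reduces to the previous one by flipping $v$ to $0$ in \pnt{p} to obtain \pnt{p'}: the originally-falsified clauses (all containing $\overline{v}$) become satisfied, while any newly-falsified clauses contain the positive literal of $v$; at least one newly-falsified clause must exist, since otherwise \pnt{p'} would satisfy \cof{F}{r} and hence $F$, contradicting unsatisfiability.

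The main obstacle is the exhibition step: \pnt{p} is a $\s{w_C}$-boundary point of \cof{F}{s_C} only if \emph{every} clause of \cof{F}{s_C} falsified by \pnt{p} contains $w_C$, whereas distinct falsified clauses may carry different $w$-witnesses from $\DS_0$. I plan to address this by passing to a minimal subset $W'$ of the collection $\{w_C\}_C$ that still ``covers'' every clause falsified by \pnt{p}, and arguing instead that \pnt{p} is a $W'$-boundary point of \cof{F}{t}, where $\pnt{t}=\bigcup_C\pnt{s_C}\leq\pnt{p}$ (a well-defined assignment, since all $\pnt{s_C}\leq\pnt{r}\cup\{(v=0)\}$). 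The delicate step is then to combine the individual single-variable D-sequents \Dds{s_C}{\s{w_C}} in $\DS_0$ (all active at \pnt{p}) into a joint redundancy statement for $W'$ in \cof{F}{t}, mirroring the compositional reasoning used in the correctness proof of the join operation in~\cite{tech_rep1}.
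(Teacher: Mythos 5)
Your proposal follows essentially the same route as the paper's proof: contradiction via Proposition~\ref{prop:pnts_vars}, the observation that $F$ must be unsatisfiable (so the boundary point is automatically removable), reduction of the case where $v$ equals~1 in \pnt{p} to the case where it equals~0 by flipping $v$, and passage to a minimal set of witness variables (your $W'$, the paper's $Z$) covering the clauses falsified by \pnt{p}, whose removable boundary point then contradicts redundancy. The ``delicate step'' you single out---aggregating the individual single-variable D-sequents of $\DS_0$ into joint redundancy of that set---is present, implicitly, in the paper's proof as well, which simply works in the one subspace $\pnt{r'}=\pnt{r}\cup\s{(v=0)}$ in which all the witnessing D-sequents are simultaneously active.
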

\begin{proof}  Assume the contrary i.e.  D-sequent \Dds{r}{\s{v}} does not hold. From Proposition~\ref{prop:pnts_vars}.
it follows that there is  a \s{v}-boundary point \pnt{p} such that $\pnt{r} \leq \pnt{p}$. This also means that $F$ is unsatisfiable. 
Indeed, if $F$ is a satisfiable, then  every variable of $F$ is already redundant and so 
any D-sequent  holds \Dds{r}{\s{v}}.

Let us assume that $v$ is equal to 0
in \pnt{p}. If $v$ equals 1 in \pnt{p}, one can always flip the value of $v$ obtaining  the point that is either a \s{v}-boundary
point or a satisfying assignment (Lemma 1 of \cite{tech_rep1}). Since the assumption we made implies that $F$ is unsatisfiable,
flipping the value of $v$ produces a \s{v}-boundary point.

Let $G$ be the set clauses falsified by point \pnt{p}. Let $C$ be a clause of $G$. Since \pnt{p} is a \s{v}-boundary point,
then $C$ contains literal $v$.
Note that under the assumption of the proposition to be proved,
if a clause of $F$ with literal $v$ is not satisfied by \pnt{r}, this clause has to contain a redundant variable
$w$ such that D-sequent \Dds{s}{\s{w}} of $\DS_0$ holds and $\pnt{s} \leq (\pnt{r} \cup \s{(v=0)}$. Let $Z$ be a minimal set of variables of $X$ 
that are present in clauses of $G$ and whose redundancy is stated by D-sequents of $\DS_0$. Then \pnt{p} is
a $Z$-boundary point of $F$. Since $F$ is unsatisfiable, this point is removable. Then from Proposition~\ref{prop:pnts_vars} 
it follows that the variables of $Z$ are not in redundant
in \cof{F}{r'} where $\pnt{r'} = \pnt{r} \cup \s{(v=0)}$. Contradiction.
\end{proof}

\begin{proposition}
\label{prop:SRB2}
The D-sequents generated by the \ti{recomp\_Dseqs} procedure of Figure~\ref{fig:modification} 
described in Subsection~\ref{subsec:generated_Dseqs} are correct.
\end{proposition}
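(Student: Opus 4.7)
I would prove Proposition~\ref{prop:SRB2} by contradiction, mirroring the boundary-point strategy used for Proposition~\ref{prop:SRB1}. Let \Dds{e}{\s{w}} be a D-sequent of $\DS_0$ with $(v=0)\in\pnt{e}$, and let $\pnt{e'}=(\pnt{e}\setminus\s{(v=0)})\cup\pnt{r}$, so that \Dds{e'}{\s{w}} is the output of \ti{recomp\_Dseqs}. Assume for contradiction that \Dds{e'}{\s{w}} does not hold. Then Proposition~\ref{prop:pnts_vars} supplies a \s{w}-boundary point \pnt{p} of \cof{F}{e'} that is removable in $F$; in particular $F$ is unsatisfiable, since otherwise virtual redundancy would hold automatically. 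Extend \pnt{p} to a complete assignment \pnt{\hat{p}} of $X$ by adjoining \pnt{e'}; since $v\notin\Va{e'}$, the extension assigns $v$ a definite value, and the argument splits on that value.

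In the first case, $v=0$ in \pnt{\hat{p}}. Then $\pnt{e'}\cup\s{(v=0)}=\pnt{e}\cup\pnt{r}$ majorizes \pnt{e}, so \pnt{\hat{p}} extends \pnt{e}. A clause of \cof{F}{e'} falsified by \pnt{\hat{p}} cannot contain literal $\overline{v}$ (otherwise $\hat{p}(v)=0$ would satisfy it), so its cofactor under $v=0$ survives into \cof{F}{e} and remains falsified by \pnt{\hat{p}}. Each such clause contains $w$ by the boundary property, so \pnt{\hat{p}} is a \s{w}-boundary point of \cof{F}{e} that is removable in $F$, and Proposition~\ref{prop:pnts_vars} contradicts \Dds{e}{\s{w}}.

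The harder second case is $v=1$ in \pnt{\hat{p}}. I would flip $v$ to obtain \pnt{\hat{p}'}, which extends $\pnt{e}\cup\pnt{r}$. A clause of $F$ falsified by \pnt{\hat{p}'} is either of type~(i), already falsified by \pnt{\hat{p}} without containing $\overline{v}$ and hence containing $w$; or of type~(ii), newly falsified by the flip and therefore containing literal $v$. For a type-(ii) clause $C$, the defining property of \Dds{r}{\s{v}} from Subsection~\ref{subsec:generated_Dseqs} rules out satisfaction by \pnt{r} (as $\pnt{\hat{p}'}\geq\pnt{r}$ would otherwise satisfy $C$), so $C$ must contain a variable $w'$ whose redundancy is certified by some \Dds{s}{\s{w'}}$\in\DS_0$ with $\pnt{s}\leq\pnt{r}\cup\s{(v=0)}\leq\pnt{e}\cup\pnt{r}$. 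Collecting all such $w'$ together with $w$ into a set $Z$, the point \pnt{\hat{p}'} is a $Z'$-boundary point of \cof{F}{e\cup r} for some minimal $Z'\subseteq Z$, and every element of $Z$ is virtually redundant in \cof{F}{e\cup r} (each $w'$ via its D-sequent in $\DS_0$, and $w$ via \Dds{e}{\s{w}}, all viewed in the common refined subspace). Since $F$ is unsatisfiable, \pnt{\hat{p}'} is removable in $F$, and Proposition~\ref{prop:pnts_vars} applied to \cof{F}{e\cup r} delivers the contradiction.

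The main obstacle I anticipate is the composition step inside the second case: I must ensure that each singleton D-sequent of $\DS_0$ selected to dispose of a type-(ii) clause remains valid when reinterpreted in the refined subspace $\pnt{e}\cup\pnt{r}$, and that the resulting singleton redundancies aggregate into the set-redundancy of $Z$ needed to apply Proposition~\ref{prop:pnts_vars}. Both are standard properties of the D-sequent calculus developed in~\cite{tech_rep1,fmcad12}, which I would cite rather than reprove.
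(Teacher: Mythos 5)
Your proposal is correct and follows essentially the same route as the paper's own proof: contradiction via Proposition~\ref{prop:pnts_vars}, a case split on the value of $v$ in the boundary point, and in the $v=1$ case flipping $v$, observing that the newly falsified clauses contain literal $v$, covering them with the redundant variables guaranteed by the D-sequent \Dds{r}{\s{v}}, and deriving a removable $(Z\cup\s{w})$-boundary point in the subspace $\pnt{e}\cup\pnt{r}$ (which is exactly the paper's $\pnt{g}=\pnt{e'}\cup\s{(v=0)}$). The composition issue you flag at the end is likewise handled in the paper only by appeal to the calculus of~\cite{tech_rep1}, so no gap relative to the original.
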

\begin{proof} Assume the contrary i.e. the D-sequent \Dds{e'}{\s{w}} obtained
from a D-sequent \Dds{e}{\s{w}} of $\DS_0$ does not hold.  This means that there is a \s{w}-boundary
point \pnt{p} such that $\pnt{e'} \leq \pnt{p}$.  It also means that $F$ is unsatisfiable.
Let us consider the following two cases.
\begin{itemize}
\item Variable $v$ is assigned 0 in \pnt{p}. Then there is a removable \s{w}-boundary point in subspace \pnt{e} and so
the D-sequent \Dds{e}{\s{w}} of $\DS_0$ does not hold. Contradiction.
\item Variable $v$ is assigned 1 in \pnt{p}. Let \pnt{p'} be the point obtained from \pnt{p} by flipping the value of $v$.
Let $G$ and $G'$ be the clauses of $F$ falsified by \pnt{p} and \pnt{p'} respectively. Denote by $G''$ the set of clauses $G' \setminus G$.
This set consists only of clauses having literal $v$ because these are the only new clauses that may get falsified after
flipping the value of $v$ from 1 to 0. Then using reasoning similar to that of Proposition~\ref{prop:SRB1}, one concludes that every clause
of $G''$ contains a variable $u$ such that  D-sequent \Dds{s}{\s{u}} of $\DS_0$ holds and $\pnt{s} \leq (\pnt{r}  \cup \s{(v=0)})$
where \pnt{r} is the assignment of the D-sequent \Dds{r}{\s{v}} generated for the branching variable $v$.
Let $Z$ be a minimal set of such variables. Then any clause of $G'$ either contains variable $w$ or a variable of $Z$.
Hence \pnt{p'} is a ($Z \cup \s{w}$)-boundary point. Since our assumption implies unsatisfiability of $F$, 
this boundary point is removable.
Let \pnt{g} be equal to $\pnt{e'} \cup \s{(v=0)}$.
Note that since $\pnt{r} \leq \pnt{e'}$ and \pnt{g} contains assignment $(v=0)$ every variable of
$Z$ is  redundant in \cof{F}{g}. On the one hand, since $\pnt{e} \leq \pnt{g}$, variable $w$ is redundant
in \cof{F}{g} as well. So the variables of $Z \cup \s{w}$ are redundant in \cof{F}{g}.
On the other hand, $\pnt{g} \leq \pnt{p'}$ and so \cof{F}{g} contains a  $(Z \cup \s{w})$-boundary point \pnt{p'}
that is removable in $F$. From Proposition~\ref{prop:pnts_vars}, it follows that variables of $(Z \cup \s{w})$
are not redundant in \cof{F}{g}.
  Contradiction.
\end{itemize}
\end{proof}

\section{Experiments}
\label{sec:experiments}

In this section, we compare \slv~with some well-known SAT-solvers on two sets of compositional and non-compositional formulas. 
In experiments, we used   the optimization technique described in  Section~\ref{sec:skipping}. Although, using this technique was not
crucial for making our points, it allowed to improve the runtimes of \slv.

Obviously, this comparison 
by no way is  comprehensive. Our objective here is as follows. In Subsection~\ref{subsec:eager_lazy}, we argued
that DPLL-based SAT-solvers is a special case \slv~when it uses eager backtracking. One may think
that due to great success of modern SAT-solvers, this version of~\slv~is simply always the best. In this section,
we show that is not the case. We give an example of meaningful formulas where the opposite strategy of lazy
backtracking works much better. This result confirms the theoretical prediction of Section~\ref{sec:comp_formulas}.

\begin{wraptable}{l}{2.7in}
\small
\caption{\ti{Solving compositional formulas}}
\scriptsize
\begin{center}
\begin{tabular}{|c|c|c|c|c|c|c|} \hline
 \#copi- & \#vars           & \#clauses         & minisat & rsat   & picosat   & ds-qsat   \\ 
 es $\times 10^3$  &  $\times 10^3$   & $\times 10^3$    &  (s.)    &  (s.)   &   (s.)    & (s.) \\ \hline
 5    &   80             & 170              &  9.1     &  5.1    &  4.2     &  0.7     \\ \hline
 10   &   160           & 340              &  110     &  28      &  20        & 1.6    \\ \hline
 20   &   320           & 680             &  917     &  143      &  80        & 3.3   \\ \hline
 40   &   640           & 1,360          &  $>$ 1hour     &    621    &  305       & 7.2   \\ \hline
 80   &   1,280           & 2,720          &  $>$ 1hour     &   2,767     &   1,048       &  15  \\ \hline
\end{tabular}                
\end{center}
\label{tbl:compos_form}
\end{wraptable}

The results of experiments with the  first set of formulas are shown in Tables~\ref{tbl:compos_form}, ~\ref{tbl:compos_form_stat1}
and ~\ref{tbl:compos_form_stat2}
This set consists of compositional formulas $F_1(X_1) \wedge \ldots \wedge F_k(X_k)$ where $X_i \cap X_j = \emptyset$. Every subformula $F_i$
is obtained by renaming/negating variables of the same satisfiable CNF formula describing a 2-bit multiplier.
 Since every subformula $F_i$ is satisfiable, then formula $F_1 \wedge \ldots \wedge F_k$ is satisfiable too for any value of $k$.

\begin{wraptable}{l}{2.4in}
\small
\caption{\ti{Statistics of Picosat and \slv~on compositional formulas}}
\scriptsize
\begin{center}
\begin{tabular}{|c|c|c|c|c|c|c|} \hline
 \#co- & \multicolumn{3}{|c|}{picosat} & \multicolumn{3}{|c|}{ds-sat} \\ 
\cline{2-4}\cline{5-7}   
 pies & \#cnfl. & \#dec. & \#impl. & \#cnfl. & \#dec. & \#impl.  \\ 
   $\times 10^3$   & $\times 10^3 $   & $\times 10^6$  & $\times 10^6$ &  $\times 10^3 $ &$\times 10^3$ & $\times 10^3$   \\ \hline
 5                 &   0.8   &  3   & 12                             & 0.8 & 25 & 62  \\ \hline
 10                &  1.5    & 10     & 39                             & 1.5 & 50 & 122      \\ \hline
 20                & 3.0     &  40    & 144                            & 3.2 & 100 & 245  \\ \hline
 40                &  5.5    & 138    & 489                            & 6.5  & 199 & 493      \\ \hline
 80                & 9.7     & 443    &1,533                           & 13.0 & 399 & 985   \\ \hline
\end{tabular}                
\end{center}
\label{tbl:compos_form_stat1}
\end{wraptable}

In the DIMACS format that we used in experiments, a variable's name
is a number.  In the formulas of  Table~\ref{tbl:compos_form}, the variables were named so 
that the DIMACS names  of variables of different subformulas $F_i$ interleaved. The objective of  negating variables was to make sure
that if an assignment \pnt{s} to the variables of $X_i$ satisfies $F_i$, the same assignment of the corresponding variables of $X_j$ is unlikely
to satisfy $F_j$.

In Table~\ref{tbl:compos_form},
we compare \slv~ with Minisat (version 2.0), RSat (version 2.01) and Picosat (version 913) on compositional formulas.
These formulas are different only in the value of $k$. The first three columns of this table show the value of  $k$, 
the number of variables and clauses in thousands. The last four columns show the time taken by Minisat, RSat,Picosat and \slv~ to solve
these formulas (in seconds).  \slv~significantly outperforms these three SAT-solvers. As predicted by Proposition~\ref{prop:compos_form},
\slv~shows linear complexity. On the other hand, the complexity of  each of the three SAT-solvers is proportional to $m \cdot k^2$
where $m$ is a constant.

Table~\ref{tbl:compos_form_stat1} provides some statistics of the performance of Picosat and \slv~on the formulas of Table~\ref{tbl:compos_form}.
The second, third and fourth columns give the number of conflicts (in thousands),  number of decision and implied assignments (in millions)
for Picosat. In the following three columns, the number of conflict nodes of the search tree,
number of decision and implied assignments (in thousands) are given for \slv. The results of Table~\ref{tbl:compos_form_stat1}
show that the numbers of conflicts of Picosat and those of conflict nodes of \slv~are comparable. Besides, for both programs 
the dependence of these numbers  on $k$ is linear. However, the numbers of decision and implied assignments made by Picosat and \slv~
differ by three orders of magnitude.
 Most importantly, the number of assignments made by \slv~
(both decision and implied) 
grows linearly  with $k$. On the other hand,  the dependence of the  number of assignments 
made by Picosat  on $k$ is closer to quadratic for both decision and implied assignments.

Table~\ref{tbl:compos_form_stat2} provides some
 additional statistics characterizing the performance of  \slv~on  the formulas of Table~\ref{tbl:compos_form}.
The second column
specifies the maximum number of conflict variables that appeared on a path of the search tree. A variable $v$ is a conflict one
if after making an assignment to $v$ a new clause of $F$ gets falsified. This column  shows that \slv~kept  branching
even after  thousands of conflicts occurred on the current path. 

\begin{wraptable}{l}{1.5in}
\small
\caption{\ti{More statistics of \slv~for compositional formulas}}
\scriptsize
\begin{center}
\begin{tabular}{|c|c|c|c|c|} \hline
  \#vars                   & max   &  \#assgn.  & max    \\ 
  $\times 10^3$     & confl    & vars in & right \\ 
                    & vars   &   sol.   (\%) &branch         \\ \hline
   80              & 505        &    2 & 14    \\ \hline
   160             & 1,027     &   0.1 & 14        \\ \hline
   320             & 1,956     &   1  & 14    \\ \hline
   640             & 3,795      &   1 & 14      \\ \hline
   1,280           & 7,351      &  0.2  & 14    \\ \hline
\end{tabular}                 
\end{center}
\label{tbl:compos_form_stat2}
\end{wraptable}

\slv~reports that a formula is satisfiable when the current 
assignment \pnt{q} does not falsify a clause of $F$ and every variable of $F$ that is  not assigned in \pnt{q} is proved redundant. The third
column of Table~\ref{tbl:compos_form_stat2} 
gives the value of $|\Va{q}| / |\V{F}|$ (in percent) at the time \slv~proved satisfiability. Informally, this value
shows that \slv~established satisfiability of $F$ knowing only a very small fragment of a satisfying assignment.
The last column of Table~\ref{tbl:compos_form_stat2} shows the maximum number of non-redundant unassigned variables that appeared
in a right branch of the search tree. The number of variables in subformulas $F_i$ we used in experiments
was equal to 16. As we showed in Proposition~\ref{prop:compos_form}, in the search tree built by \slv~for a compositional formula,
the number of free variables that may appear in a right branch is bounded by $|V(F_i)|$ i.e by 16. Our experiments confirmed that prediction.
The fact that the size of  right branches is so small  means that when solving a formula $F$ of Table~\ref{tbl:compos_form}, \slv~dealt
 only with very small fragments of $F$.

\begin{wraptable}{l}{2.7in}
\small
\caption{\ti{Solving non-compositional formulas}}
\scriptsize
\begin{center}
\begin{tabular}{|c|c|c|c|c|c|c|} \hline
 \#sub-        & \#vars & minisat    & rsat    & picosat   & ds-qsat & ds-qsat*   \\ 
 form.         &    $\times 10^3$    &(s.)        &  (s.)      &   (s.)        &   (s.)      &  (s.)      \\
 $\times 10^3$ &       &            &     &      &     &  \\ \hline
 5             & 75  & 5.0    & 3.2     &  3.6      &  5.1     &  0.4 \\ \hline
 10            & 150 & 34    &  21    &   15    &   13    &  1.0  \\ \hline
 20            & 300 & 548     &  79    &  57     &  30     & 2.3   \\ \hline
 40            & 600 & $>$1hour  & 430   &  231     &  57     & 5.9   \\ \hline
 80            & 1,200 & $>$1hour  &  1,869  &  859     & 127     &  19  \\ \hline
\end{tabular}                
\label{tbl:non_compos_form}
\end{center}

\end{wraptable}

Generally speaking, the problems with compositional formulas can be easily fixed by solving independent subformulas separately.
Such subformulas can be found in linear time by looking for strongly connected components of a graph relating clauses that share a variable.
To eliminate such a possibility we conducted the second experiment.
In this experiment, we compared \slv~ and the three SAT-solvers above on \ti{non-compositional formulas}. 
Those formulas were obtained
from the same subformulas $F_i$ obtained from a CNF formula  specifying a 2-bit multiplier by renaming/negating variables. However, now,
renaming was done in such a way that every pair of subformulas $F_i$,$F_{i+1}$, $i=1,\ldots,k-1$ shared exactly one variable.
So now formulas $F=F_1(X_1) \wedge \ldots \wedge F_k(X_k)$ we used in experiments did not have any independent subformulas. 
Table~\ref{tbl:non_compos_form} shows the results of the second experiment (all formulas are still satisfiable).
 The first two columns of Table~\ref{tbl:non_compos_form} specify
the value of $k$ (i.e. the number of subformulas $F_i$) and the number of variables of $F$ in thousands. The next four columns
give the runtimes of Minisat, RSat, Picosat and \slv~in seconds. These runtimes show that \slv~still outperforms these three SAT-solvers 
and scales better. 

The last column of Table~\ref{tbl:non_compos_form} illustrates the ability of D-sequents to take into account formula structure.
In this column, we give the runtimes of \slv~when it first branched on communication variables (i.e. ones shared by subformulas $F_i$).
 So in this case, \slv~had information about formula structure. 
 The results  show that the knowledge of communication variables considerably improved the performance of \slv.

\setlength{\intextsep}{2pt}
\setlength{\textfloatsep}{2pt}
\begin{wraptable}{l}{2.3in}
\small
\caption{\ti{Statistics of \slv~for non-compositional formulas}}
\scriptsize
\vspace{-10pt}
\begin{center}
\begin{tabular}{|c|c|c|c|c|c|} \hline
  \#vars           & max      &  \#assgn.  & max     & max  & max       \\ 
  $\times 10^3$    & confl    & vars in    & right   & right & right     \\ 
                   & vars     &   sol.     & branch  & branch & branch*   \\
                   &          &   (\%)     &         & (\%)  &          \\ \hline

   75              &  461     &   4        &  338    &  0.5 & 11\\ \hline
   150             &  903     &   1        &  475    & 0.3   & 11     \\ \hline
   300             &  1,765   &   1        &  571    & 0.2   & 11\\ \hline
   600             &  3,512   &   2        &  773    & 0.1  &  11\\ \hline
   1,200           &  7,029   &   1        &  880    & 0.1   &  11\\ \hline
\end{tabular}                 
\label{tbl:non_compos_form_stat}
\end{center}

\end{wraptable}

Table~\ref{tbl:non_compos_form_stat} gives some statistics describing the performance of  \slv~on  the formulas of Table~\ref{tbl:non_compos_form}.
The second and third columns of Table~\ref{tbl:non_compos_form_stat} are similar to the corresponding columns of Table~\ref{tbl:compos_form_stat2}.
A lot of conflicts occurred on a path of the search tree built by \slv~ and by the time \slv~reported satisfiability, only a small fragment of a satisfying
assignment was known. The fourth column shows the maximum size of a right branch 
 of the search tree built by \slv~(in terms of the number of non-redundant variables). 
The next column gives the ratio of the maximum size of a right branch and the total number
of variables (in percent). Notice, that now the maximum size of right branches is much larger than in the case of compositional formulas.
Nevertheless, again, \slv~dealt only with very small fragments of the formula. The last column gives  the maximum size of right branches
when \slv~first branched on communication variables.  Such structure-aware branching allowed \slv~to dramatically reduce 
the size of right branches, which explains why \slv~had much faster runtimes in this case
 (shown in the last column of Table~\ref{tbl:non_compos_form}).

Although \slv~performed well on the formulas we used in experiments, lazy backtracking  is  too extreme to be successful
on a more general set of benchmarks. Let $F$ be a formula to be checked for satisfiability. Let a clause $C$ of $F$ be falsified
by the current assignment and $Z$ be the set of unassigned variables.  At this point, any variable $v \in Z$ is redundant due to
$C$ being falsified.
Lazy backtracking essentially assumes that by keeping branching in the presence of the conflict one will find a better explanation of redundancy of $v$.
 A less drastic approach is as follows. Once clause $C$ gets falsified, 
a D-sequent \Dds{r}{Z'} is derived where \pnt{r} is the shortest assignment falsifying $C$ and $Z'$ consists of some variables
of $Z$ that are related to clause $C$. For example, if $C$ is in a subformula $G$ of $F$ specifying a design block
it may make sense to form $Z'$ of the unassigned variables of $G$.

\section{Background}
\label{sec:background}
In 1960, Davis and Putnam  introduced a QSAT-solver that is now called the DP procedure~\cite{dp}.
Since it performed poorly even on small formulas,  a new algorithm called the DPLL procedure
was introduced in 1962 ~\cite{dpll}.  Two major changes were made in the DPLL procedure in comparison to the DP procedure.
First, the DPLL procedure employed branching and could use different variable order in different branches.
Second, it changed the semantics of variable elimination that the DP procedure was based on.
Instead, the semantics of  elimination of unsatisfiable assignments was introduced.
The DPLL procedure backtracks as soon as it finds out that the current partial assignment cannot be
extended to a satisfying assignment. Such eager backtracking is a characteristic feature of SAT-solvers i.e. algorithms
proving satisfiability by producing a satisfying assignment.

The first change has been undoubtedly a great step forward. The DP procedure  eliminates variables
in one particular global order which makes this procedure very inefficient. The second change
however has its pros and cons. On the one hand, DPLL has a very simple and natural semantics, which facilitated
the great progress in SAT-solving  seen in the last two decades~\cite{grasp,sato,chaff,berkmin,minisat,rsat,picosat}.
On the other hand, as we argued before, the necessity to generate a satisfying assignment to prove satisfiability deprived
DPLL-based SAT-solvers of powerful transformations preserving equisatisfiability rather than functional equivalence.

Generally speaking, transformations preserving equisatisfiability are routinely used by modern algorithms, but their usage is limited
one way or another.
For example such transformations  are employed in preprocessing where some  variables are resolved out~\cite{prepr} or redundant clauses are
removed~\cite{tacas_blocked_clauses}. However, such transformations have a limited scope: they are used just to simplify the original
formula  that 
is then passed to a DPLL-based SAT-solver. Second, transformations preserving equisatisfiability are ubiquitous in
algorithms on  circuit formulas e.g. in ATPG algorithms~\cite{atpg}. Such  algorithms often exploit the fact that
a gate becomes unobservable under some partial assignment \pnt{r}. In terms of variable redundancy, this  means that the variable $v$
specifying the output of this gate is redundant in subspace \pnt{r}. Importantly, this redundancy is defined with respect to
a formula where assignments to non-output variables do not matter. Such variables can be viewed as existentially quantified
and the discarding of  clauses containing redundant non-output variables does not preserve functional equivalence.
However, such transformations are restricted
only to formulas generated off circuits and do not form a complete calculus. Typically, these transformations 
are used in the form of heuristics.

The machinery of D-sequents was introduced in ~\cite{tech_rep1,tech_rep2}. In turn, the notion of D-sequents
and  join operation were inspired by the relation between variable redundancy and boundary point elimination~\cite{sat09,haifa10}.
In~\cite{tech_rep1,fmcad12}, we formulated a method of quantifier elimination called DDS (Derivation of D-Sequents).
Since QSAT is a special case of the quantifier elimination problem, DDS can be used to check satisfiability. 
However, since DDS employs eager backtracking,  such an algorithm
is a SAT-solver rather than  a QSAT-solver. In particular, as we showed in \cite{tech_rep1}, the complexity of DDS
on compositional formulas is quadratic in the number $k$ of subformulas, while the complexity of \slv~is linear in $k$. 

\section{Conclusion}
\label{sec:conclusion}
The results of this paper lead to the following three conclusions.

\vspace{5pt}
\noindent 1) DPLL-based procedures have scalability issues. These issues can be observed 
even on compositional formulas i.e. on formulas with a very simple structure.
Arguably, the root of the problem, is that DPLL-procedures are designed to prove
satisfiability by producing a satisfying assignment. This deprives such procedures
from using  powerful transformations that preserve equisatisfiability rather than functional equivalence.

\vspace{5pt}
\noindent 2) D-sequents are an effective  tool for building scalable algorithms. In particular,
the algorithm \slv~we describe in the paper scales well on compositional formulas.
The reason for such scalability is that \slv~scarifies the ability to generate satisfying
assignments to tap into the power of  transformations preserving only equisatisfiability.
The essence of transformations used by \slv~is to discard large portions of the formula
that are proved to be redundant in the current subspace. The results of experiments
with \slv~on two simple classes of compositional and non-compositional formulas show
the big promise of algorithms based on D-sequents.

\vspace{5pt}
\noindent 3) In this paper, we have only touched the  tip of the iceberg. A great deal of issues needs to be resolved
to make  QSAT-solving by D-sequents practical. 

\section{Acknowledgment}
This work was funded  in part by NSF grant CCF-1117184.
\bibliographystyle{plain}
\bibliography{short_sat,local}
\end{document}